\renewcommand{\textbf}[1]{\begingroup\bfseries\mathversion{bold}#1\endgroup}
\newlength{\bibitemsep}\setlength{\bibitemsep}{.2\baselineskip plus .05\baselineskip minus .05\baselineskip}
\newlength{\bibparskip}\setlength{\bibparskip}{0pt}
\let\oldthebibliography\thebibliography
\renewcommand\thebibliography[1]{%
  \oldthebibliography{#1}%
  \setlength{\parskip}{\bibitemsep}%
  \setlength{\itemsep}{\bibparskip}%
}
\newtheorem{thm}{Theorem}[section]
\newtheorem{defi}{Definition}[section]
\newtheorem{corollary}[thm]{Corollary}
\newtheorem{prop}[thm]{Proposition}
\newtheorem{Conjecture}[thm]{Conjecture}
\newtheorem{lemma}[thm]{Lemma}
\theoremstyle{definition}
\newtheorem{remark}[thm]{Remark}
\newtheorem{example}[thm]{Example}
\newcommand{\argmin}{\mathop{\rm argmin}\nolimits}
\newcommand{\hcp}{\mathop{\rm hcp}\nolimits}
\newcommand{\R}{\mathbb R}
\newcommand{\Z}{\mathbb Z}
\numberwithin{equation}{section}
\def\XXint#1#2#3{{\setbox0=\hbox{$#1{#2#3}{\int}$}
    \vcenter{\hbox{$#2#3$}}\kern-.5\wd0}}
\date{date}
\begin{document}
\title{Minimizing lattice structures for Morse potential energy \\ in two and three dimensions}
\author{Laurent B\'{e}termin\\ \\
QMATH, Department of Mathematical Sciences, University of Copenhagen,\\ Universitetsparken 5, DK-2100 Copenhagen \O, Denmark.\\ \texttt{betermin@math.ku.dk}. ORCID id: 0000-0003-4070-3344 }
\date\today
\maketitle

\begin{abstract}
We investigate the local and global optimality of the triangular, square, simple cubic, face-centred-cubic (FCC), body-centred-cubic (BCC) lattices and the hexagonal-close-packing (HCP) structure for a potential energy per point generated by a Morse potential with parameters $(\alpha,r_0)$. In dimension 2 and for $\alpha$ large enough, the optimality of the triangular lattice is shown at fixed densities belonging to an explicit interval, using a method based on lattice theta function properties. Furthermore, this energy per point is numerically studied among all two-dimensional Bravais lattices with respect to their density. The behaviour of the minimizer, when the density varies, matches with the one that has been already observed for the Lennard-Jones potential, confirming a conjecture we have previously stated for differences of completely monotone functions. Furthermore, in dimension 3, the local minimality of the cubic, FCC and BCC lattices are checked, showing several interesting similarities with the Lennard-Jones potential case. We also show that the square, triangular, cubic, FCC and BCC lattices are the only Bravais lattices in dimensions 2 and 3 being critical points of a large class of lattice energies (including the one studied in this paper) in some open intervals of densities, as we observe for the Lennard-Jones and the Morse potential lattice energies. More surprisingly, in the Morse potential case, we numerically found a transition of the global minimizer from BCC, FCC to HCP, as $\alpha$ increases, that we partially and heuristically explain from the lattice theta functions properties. Thus, it allows us to state a conjecture about the global minimizer of the Morse lattice energy with respect to the value of $\alpha$. Finally, we compare the values of $\alpha$ found experimentally for metals and rare-gas crystals with the expected lattice ground-state structure given by our numerical investigation/conjecture. Only in a few cases does the known ground-state crystal structure match the minimizer we find for the expected value of $\alpha$. Our conclusion is that the pairwise interaction model with Morse potential and fixed $\alpha$ is not adapted to describe metals and rare-gas crystals if we want to take into consideration that the lattice structure we find in nature is the ground-state of the associated potential energy.
\end{abstract}

\noindent
\textbf{AMS Classification:}  Primary 74G65 ; Secondary 82B20, 35Q40.\\
\textbf{Keywords:} Morse potential; Lattice energy; Ground-state; Crystallization; Energy minimization; Stability.

\tableofcontents

\medskip

\section{Introduction and main results}

A fundamental question in Mathematical Physics that has been actively investigated recently is the following ``Crystal Problem" (also called ``The crystallization conjecture" see e.g. \cite{RadinLowT,BlancLewin-2015}): Why are solids crystalline? Answering this question in a rigorous mathematical way is known to be extremely challenging, even though the interactions between atoms or molecules are assumed to be a sum of pairwise potentials. Whereas the one-dimensional version of this problem is well-understood \cite{Ventevogel1,VN2, VN3, Rad1,Crystbinary1d}, only few results have been proved in dimensions $2$ and $3$ for models consisting of short-range interactions \cite{Rad2,Stef1,Stef2,Luca:2016aa,Friedrich:2018aa}, perturbations of the hard-sphere potential \cite{Crystal,ELi,TheilFlatley} and oscillating functions \cite{Suto1}.

\medskip

In 1929, Morse \cite{Morse} solved the three-dimensional Schr\"odinger equation with potential
$$
V_M(|x|):=e^{-2\alpha(|x|-r_0)}-2e^{-\alpha(|x|-r_0)},\quad x\in \R^d,
$$
known now as the ``Morse Potential", where $|.|$ is the euclidean norm in $\R^d$. This is an attractive-repulsive potential (see Figure \ref{fig:Morseplots} for a plot), i.e. a decreasing-increasing potential having one well. Parameters $r_0$ and $\alpha$ respectively represent the minimizer of $V_M$ and the hardness of the interaction (as $\alpha$ increases, $V_M$ goes to a hard-core potential, see Figure \ref{fig:Morseplots}). This potential is known to be a canonical model for social aggregation -- e.g. swarming and flocking -- as explained in \cite[Sect. 4]{PrimerSwarm} (see also \cite{Selfpropelled,CompactGlobMinMorse} and references therein). Furthermore, it has been shown (see e.g. \cite{RareGasHorton}) that $V_M$ provides a description of the vibrational properties of rare-gas crystals which is better than the one given by the quantum harmonic oscillator (see also \cite{Raff1990,AlimietalMorse,ParsonMorse,BarkerMorse}). Moreover, interactions in cubic metals are also well-described by the Morse potential, as explained in \cite{GirifalcoWeizer,LincolnetalMorse,SharmaKachhavaMorse,MilsteinBCC,PamuketalMorse,HungetalMorse} and \cite[p. 22]{KuboNagamiya}. The values of parameters $(\alpha,r_0)$ can then be computed from experimental data for many metals and rare-gas crystals.

\begin{figure}[!h]
\centering
\includegraphics[width=7cm]{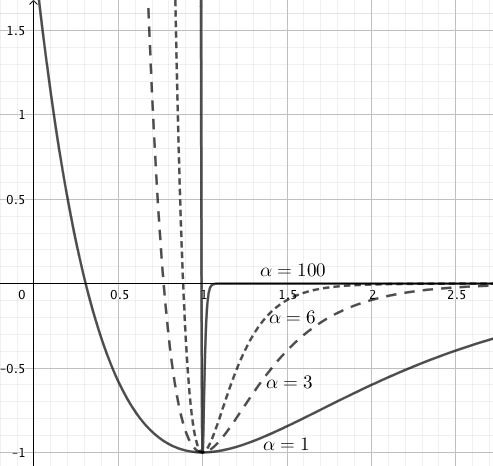} 
\caption{Plots of $V_M$ for $r_0=1$ and $\alpha\in \{1,3,6,100\}$.}
\label{fig:Morseplots}
\end{figure}

\medskip

Crystallization problems for Morse potential have not received so much attention. Ventevogel and Nijboer \cite[p. 276]{VN2} pointed out the fact that, even in dimension 1, any general crystallization result is difficult to reach for $V_M$ (they claim to have the solution among one-dimensional periodic configurations with 2 and 3 points in each period). However, in a recent paper, Bandegi and Shirokoff \cite[Sect. 6.1]{Bandegi:2015aa} give numerical evidences for the global optimality of the equidistant configuration for some values of the density and the parameters of the Morse potential, using convex relaxation arguments. In higher dimension, no such result exists and only local stability properties has been proved in \cite{MilsteinMorse1970}, using Born's stability criteria for crystals, following the methods developed in \cite{born1940,misra1940}.

\medskip

Instead of investigating the pure Crystal Problem involving $V_M$ -- i.e. minimizing the interaction energy among all the possible point configurations -- we choose to study the minimization of the energy per point among periodic lattices where the points are interacting via a Morse potential. This point of view has been taken in several previous works in Number Theory \cite{Rankin,Eno2,Ennola,Cassels,Diananda,Mont,SarStromb, Coulangeon:kx,Coulangeon:2010uq,Gruber}, optimal point configurations problems \cite{CohnKumar} and Mathematical Physics \cite{NonnenVoros,AftBN,CheOshita,Sandier_Serfaty,BeterminPetrache,BetKnupfdiffuse}. It is a natural first step for keeping or rejecting periodic structures which could be good candidates for the Crystal Problem associated to the interaction potential. We have already studied the Lennard-Jones type potential case in \cite{Betermin:2014fy,BetTheta15,Beterminlocal3d,Beterloc,OptinonCM} and our goal is to give the same kind of quantitative results for the Morse potential.

\medskip

Let us first define our spaces of periodic lattices. Let $\mathcal{L}_d^\circ(V)$ be the space of Bravais lattices, i.e. having the form $L=\bigoplus_{i=1}^d \Z u_i$ where $\{u_i\}_i$ is a basis of $\R^d$, of area (in dimension $2$, and we will note it $A$) or volume $V>0$ (i.e. the volume  $|\det \{u_i\}_i|$ of their unit cell) and $\mathcal{L}_d$ be the space of all $d$-dimensional Bravais lattices. We also write $\mathcal{P}_d$ the space of all $d$-dimensional periodic configurations, i.e. all the possible finite unions of Bravais lattices. We hence have, for any volume $V$, $\mathcal{L}_d^{\circ}(V)\subset \mathcal{L}_d\subset \mathcal{P}_d$.  Furthermore, in dimension $2$ and $3$, as explained for instance in \cite{Terras_1988}, any Bravais lattice $L\in \mathcal{L}_d$ can be parametrized by a vector $(x,y,A)$ or $(u,v,x,y,z,V)$ where $(x,y)\in \R^2$ (resp. $(u,v,x,y,z)\in \R^5$) belongs to a fundamental domain containing only one copy of each lattice (it is basucally due to the reduction of quadratic forms). This parametrization will be used in Sections \ref{sec-numeric2d} and \ref{sec-numeric3d}. In particular, for any $E:\mathcal{L}_d\to \R$ of class $C^2$, the gradient and the Hessian of $E$ at $L\in \mathcal{L}_d$ will be respectively denoted by $\nabla_L E[L]$ and $D^2 E[L]$. For more details, see \cite{Beterloc,Beterminlocal3d}. Furthermore, the same differentation with respect to the structure can be done for periodic configurations (see e.g. \cite{Coulangeon:2010uq} for details).

\medskip

We now define the energy we want to focus on. Writing 
$$
V_M(|x|)=e^{\alpha r_0}\left(e^{\alpha r_0}e^{-2\alpha |x|} -2e^{-\alpha |x|} \right)=:e^{\alpha r_0} f(|x|^2),\quad f(r):=e^{\alpha r_0} e^{-2\alpha \sqrt{r}}-2e^{-\alpha \sqrt{r}},
$$
the goal of this paper is to investigate rigorously and numerically the energy per point defined by
\begin{equation}
E_{\alpha,r_0}[L]:=\sum_{p\in L} f(|p|^2)=e^{\alpha r_0} \sum_{p\in L} e^{-2\alpha |p|}-2\sum_{p\in L} e^{-\alpha |p|},
\end{equation}
among Bravais lattices $L\in \mathcal{L}_d$ (or among periodic configurations). This energy, as it is the case for Lennard-Jones type potentials and for the difference of Yukawa potentials (see \cite[Sec. 5]{BetTheta15}), can be seen as a difference of competitive interactions with completely monotone potentials (i.e. the functions are positive and the signs of their derivatives alternate). It has been shown in \cite[Sec. 3.1]{BetTheta15} that $L\mapsto \sum_{p\in L} e^{-\beta |p|}$ is minimized by the triangular lattice and the square lattice $\Z^2$ is a saddle point of the energy, in $\mathcal{L}_2^\circ(A)$ for all fixed $A$ and for all $\beta>0$, following the lattice theta function properties proved in \cite{Mont}. Thus, studying the minimization on $\mathcal{L}_d^\circ(V)$ for a difference of such lattice energies must show, as in the Lennard-Jones case, many other minimizing lattices with respect to $V$, at least in dimension 2 (see \cite{Beterloc}).

\medskip

Using a method we have developed in \cite{Betermin:2014fy,BetTheta15}, we find an interval of areas $A$ such that the triangular lattice $\Lambda_A$ (see Figure \ref{fig-lattice2d}) defined by
\begin{equation}
\Lambda_A:=\sqrt{\frac{2A}{\sqrt{3}}}\left[\Z\left(1,0 \right)\oplus \Z\left( \frac{1}{2},\frac{\sqrt{3}}{2} \right)  \right]
\end{equation}
is the unique minimizer, up to rotation, of $E_{\alpha,r_0}$ in $\mathcal{L}_2^\circ(A)$, when $\alpha$ is not too small. Furthermore, the non-optimality of the triangular lattice at low density (i.e. large area) can be shown as a direct consequence of \cite[Thm 1.5]{OptinonCM}, which is itself a consequence of Montgomery Theorem \cite[Thm 1]{Mont} and the functional equation for the lattice theta function \cite[Eq. (2)]{Mont} defined by
\begin{equation}\label{def-thetafunction}
\theta_L(\alpha):=\sum_{p\in L} e^{-\pi \alpha |p|^2}.
\end{equation}
\begin{figure}[!h]
\centering
\includegraphics[width=6cm]{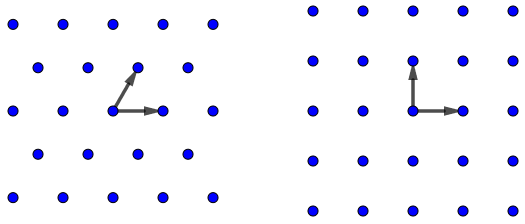} 
 \caption{Triangular lattice $\Lambda_A$ and square lattice $\sqrt{A}\Z^2$. The area of their unit cells is $A$} \label{fig-lattice2d}
\end{figure}
More precisely, we prove the following result.

\begin{thm}[Sufficient conditions for the optimality of $\Lambda_A$]\label{thm-main}
Let $\alpha>0$. If $A$ satisfies one of the following conditions:
\begin{enumerate}
\item[(C1)] $\frac{8\pi}{\alpha^2}\leq A<\frac{\pi r_0}{\alpha}$ and $e^{\alpha r_0} e^{-\frac{\alpha^2 A}{\pi}}-1 \geq e^{-\frac{\alpha^2 A}{4\pi}}$,
\item[(C2)] $A<\min \left\{ \frac{\pi r_0}{\alpha}, \frac{8\pi}{\alpha^2} \right\}$ and $\displaystyle e^{\alpha r_0} e^{-\frac{\alpha^2 A}{\pi}}-1\geq \frac{64\pi^2}{e^2 A^2\alpha^4}$,
\end{enumerate}
then the triangular lattice $\Lambda_A$ is the unique minimizer of $E_{\alpha,r_0}$ in $\mathcal{L}_2^\circ(A)$, up to rotation.

\medskip

Furthermore, for any $\alpha,r_0\in (0,+\infty)$ there exists $A_1$ such that for any $A>A_1$, $\Lambda_A$ is not a minimizer of $E_{\alpha,r_0}$ in $\mathcal{L}_2^\circ(A)$.
\end{thm}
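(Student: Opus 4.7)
For the sufficient conditions I would follow the method of \cite{Betermin:2014fy,BetTheta15}, exploiting the complete monotonicity of $r\mapsto e^{-c\sqrt r}$. Its Bernstein representation
$$e^{-c\sqrt r}=\frac{c}{2\sqrt\pi}\int_0^\infty t^{-3/2}e^{-c^2/(4t)-tr}\,dt,\qquad c,r>0,$$
applied with $c=\alpha$ and $c=2\alpha$, summed over $p\in L$, and rescaled via $t=\pi u$, gives
$$E_{\alpha,r_0}[L]=\frac{\alpha}{\pi}\int_0^\infty u^{-3/2}\phi_{\alpha,r_0}(u)\,\theta_L(u)\,du,\quad \phi_{\alpha,r_0}(u):=e^{\alpha r_0}e^{-\alpha^2/(\pi u)}-e^{-\alpha^2/(4\pi u)}.$$
A direct computation shows $\phi_{\alpha,r_0}(u)\geq 0$ if and only if $u\geq u_0:=\frac{3\alpha}{4\pi r_0}$, and by Montgomery's theorem \cite{Mont} the difference $\Delta_L(u):=\theta_L(u)-\theta_{\Lambda_A}(u)$ is non-negative for all $u>0$ and all $L\in\mathcal{L}_2^\circ(A)$. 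Hence the contribution of $[u_0,\infty)$ to $E_{\alpha,r_0}[L]-E_{\alpha,r_0}[\Lambda_A]$ is automatically non-negative, and the task reduces to dominating the non-positive contribution from $(0,u_0)$.

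The plan is then to split the integral at $u=1/A$ (permitted since the hypothesis $A<\pi r_0/\alpha$ forces $1/A>u_0$), and use the Jacobi functional equation $\theta_L(u)=(Au)^{-1}\theta_{L^*}(1/u)$ to control $\Delta_L$ on the dangerous interval $(0,1/A)$ in terms of $\Delta_{L^*}(1/u)$, which decays exponentially as $u\to 0$ because $\theta_{L^*}(1/u)\to 1$. This estimate, inserted back in the integral, leads to a pointwise bound involving the weight $u^{-2}e^{-\alpha^2/(4\pi u)}$ (the $u^{-2}$ comes from combining the original $u^{-3/2}$ with the $(Au)^{-1}$ factor from the functional equation). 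This weight has unconstrained maximum $64\pi^2/(\alpha^4 e^2)$ attained at $u^*=\alpha^2/(8\pi)$, which separates the two regimes: for $A\geq 8\pi/\alpha^2$ (condition (C1)) the maximum on $(0,1/A)$ lies at the boundary $u=1/A$ with value $A^2 e^{-\alpha^2A/(4\pi)}$, while for $A<8\pi/\alpha^2$ (condition (C2)) the maximum is interior at $u^*$ with value $64\pi^2/(\alpha^4 e^2)$. Balancing this upper bound against the positive contribution on $[1/A,\infty)$---which is at least proportional to $\phi_{\alpha,r_0}(1/A)=e^{\alpha r_0}e^{-\alpha^2 A/\pi}-e^{-\alpha^2 A/(4\pi)}$---reduces the minimality statement to precisely the stated scalar inequalities. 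The main obstacle is producing the sharp uniform-in-$L$ estimate on $\Delta_{L^*}(1/u)$ so that the two regime-dependent constants on the right-hand sides of (C1) and (C2) come out exactly as written; the strict inequality (and hence uniqueness up to rotation) comes from the strict form of Montgomery's theorem.

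For the non-optimality at low density, the function $f(r)=e^{\alpha r_0}e^{-2\alpha\sqrt r}-2e^{-\alpha\sqrt r}$ is a difference of two completely monotone functions of $r>0$, as the Bernstein representation above shows (once with $c=\alpha$ and once with $c=2\alpha$, each giving a positive-measure Laplace representation). Hence \cite[Thm 1.5]{OptinonCM}---itself a consequence of Montgomery's theorem combined with the functional equation for $\theta_L$---applies directly to $f$ and yields the required threshold $A_1=A_1(\alpha,r_0)>0$ beyond which $\Lambda_A$ ceases to minimize $E_{\alpha,r_0}$ in $\mathcal{L}_2^\circ(A)$.
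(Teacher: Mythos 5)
Your overall strategy is the paper's: the Bernstein/Laplace representation of $f$, Montgomery's theorem for $\theta_L$, and the Jacobi functional equation, with the sufficient conditions coming from a two-regime maximization of a scalar weight. Your kernel computations are exactly right: the weight $u^{-2}e^{-\alpha^2/(4\pi u)}$, its unconstrained maximum $64\pi^2/(e^2\alpha^4)$ at $u^*=\alpha^2/(8\pi)$, the boundary value $A^2e^{-\alpha^2A/(4\pi)}$ at $u=1/A$, and the regime split at $A=8\pi/\alpha^2$ are, after the substitution $y=1/(Au)$, precisely the paper's maximization of $y^2e^{-\beta y/4}$ over $y\geq 1$ with $\beta=\alpha^2A/\pi$. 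Your argument for the second part (the sign change of $\mu_f$, so that $\mu_f<0$ in a neighbourhood of the origin, then \cite[Thm 1.5]{OptinonCM}) coincides with the paper's proof.

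The genuine gap is in how you propose to close the first part. You plan to dominate the negative contribution on $(0,1/A)$ via a ``sharp uniform-in-$L$ estimate on $\Delta_{L^*}(1/u)$,'' and you correctly flag this as the main obstacle --- but such a uniform estimate does not exist: for a thin rectangular lattice $L=\Z(\varepsilon,0)\oplus\Z(0,A/\varepsilon)$ the dual contains a vector of length $\varepsilon/A$, so $\theta_{L^*}(1/u)-\theta_{\Lambda_A^*}(1/u)$ does not decay as $u\to 0$ uniformly over $\mathcal{L}_2^\circ(A)$; the decay rate degenerates with the lattice (indeed $\theta_{L^*}(1/u)$ even blows up as $\varepsilon\to 0$ at fixed $u$). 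The missing idea is that no estimate on theta differences is needed at all: the functional equation is an exact identity, so the piece of the integral over $(0,1/A)$ folds exactly onto $[1/A,+\infty)$ under $u\mapsto 1/(A^2u)$ (rescaling the dual back to area $A$; the triangular lattice is self-dual up to rotation and dilation). This is \cite[Thm 1.1]{BetTheta15}, which the paper invokes, and it yields $E_{\alpha,r_0}[L]-E_{\alpha,r_0}[\Lambda_A]=\frac{\pi}{A}\int_1^{+\infty}\bigl(\theta_L(y/A)-\theta_{\Lambda_A}(y/A)\bigr)g_A(y)\,dy$ with the combined scalar kernel $g_A(y)=y^{-1}\mu_f\left(\frac{\pi}{yA}\right)+\mu_f\left(\frac{\pi y}{A}\right)$. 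All the $L$-dependence then sits in the single factor $\theta_L-\theta_{\Lambda_A}$, which is nonnegative pointwise and strictly positive unless $L$ is triangular (Montgomery), giving uniqueness up to rotation; (C1)/(C2) are exactly the conditions making $g_A\geq 0$ on $[1,+\infty)$, via the pointwise bound $u_A(y)\geq -y^2e^{-\beta y/4}+Ce^{-\beta}-1$ with $C=e^{\alpha r_0}$, followed by your two-regime maximization. Note also that the positive part must be lower-bounded by $Ce^{-\beta}-1$ (the infimum over $y\geq 1$ of $Ce^{-\beta/y}-e^{-\beta/(4y)}$), not by your $\phi_{\alpha,r_0}(1/A)=Ce^{-\beta}-e^{-\beta/4}$; with your choice the scalar inequalities would not come out as stated. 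In short: your computations are correct, but your assembly replaces an exact folding by a domination argument that fails uniformly; with the folding, the proof closes exactly as in the paper.
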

\begin{remark}
If $\alpha=6$ and $r_0=1$, then $\min \left\{ \frac{\pi r_0}{\alpha}, \frac{8\pi}{\alpha^2} \right\}=\frac{\pi}{6}\approx 0.524$, $(C2)$ is satisfied and, solving numerically $e^{\alpha r_0} e^{-\frac{\alpha^2 A}{\pi}}-1\geq \frac{64\pi^2}{e^2 A^2\alpha^4}$, we find that $0.0139\leq A\leq 0.5034$ which is then an interval of area where the triangular lattice is the minimizer of $E_{6,1}$ in $\mathcal{L}_2^\circ(A)$. Other numerical values are given in Table \ref{table-C1C2}. \\
In particular, we notice that this method does not apply for $\alpha<3.078$ (approximatively), where none of the conditions are satisfied. This is due to the lower bound \eqref{Eq:LBuA} we have used in the proof. Numerical investigations show that area bounds exist for smaller value of $\alpha$, but we have chosen conditions $(C1)$ and $(C2)$ because they are simple and tractable and we are mostly interested in the $\alpha=6$, $r_0=1$ case which is comparable to the classical Lennard-Jones potential case \eqref{def-LJ} (see Figure \ref{fig-LJMorse}). Furthermore, in most of the case (for instance in \cite{GirifalcoWeizer,LincolnetalMorse,SharmaKachhavaMorse,PamuketalMorse,HungetalMorse,Raff1990,AlimietalMorse,ParsonMorse,BarkerMorse}), the values of $\alpha$ (rescaled such that $r_0=1$) computed from experimental data are larger than $3.078$.
\end{remark}

Since both conditions $(C1)$ and $(C2)$ are satisfied for large values of $\alpha$, we then derive the following consequence of Theorem \ref{thm-main} that gives an explicit interval of areas where the triangular lattice is minimal.

\begin{corollary}[Interval of areas for the optimality of the triangular lattice when $\alpha$ is large]\label{thm-triang}
Let $A_0=-\frac{4\pi}{\alpha^2}\log X_0$ where $X_0$ is the unique solution of $e^{\alpha r_0}X^4-X-1=0$ on $[e^{-\alpha r_0/4},e^{-2}]$.
If $\alpha> \frac{8+\log 2}{r_0}$ and $A$ satisfies
$$
\frac{8\pi}{\alpha^2 \sqrt{e^{\alpha r_0-8}-1}}\leq A\leq A_0<\frac{\pi r_0}{\alpha},
$$
then $\Lambda_A$ is the unique minimizer of $E_{\alpha,r_0}$, up to rotation, in $\mathcal{L}_2^\circ(A)$.\\
\end{corollary}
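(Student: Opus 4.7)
The plan is to show that under the assumption $\alpha r_0 > 8+\log 2$, Condition (C2) of Theorem \ref{thm-main} holds on some interval $[B,\, 8\pi/\alpha^2)$ and Condition (C1) holds on the interval $[8\pi/\alpha^2,\, A_0]$, so that their union covers $[B, A_0]$, which will coincide with the interval stated in the corollary.

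First, I would introduce the substitution $X = e^{-\alpha^2 A/(4\pi)} \in (0,1)$, which is a decreasing bijection from $(0,\infty)$. Under this change, the inequality in (C1) reads $g(X) := e^{\alpha r_0} X^4 - X - 1 \geq 0$. I would analyse $g$ as follows: since $g'(X) = 4 e^{\alpha r_0} X^3 - 1$ is increasing and since $g'(e^{-\alpha r_0/4}) = 4 e^{\alpha r_0/4}-1>0$ (as $\alpha r_0>8$), the function $g$ is strictly increasing on $[e^{-\alpha r_0/4}, \infty)$. A direct evaluation gives $g(e^{-\alpha r_0/4}) = -e^{-\alpha r_0/4}<0$ and $g(e^{-2}) = e^{\alpha r_0-8} - e^{-2} - 1$, which is positive precisely because $\alpha r_0 > 8+\log 2$ implies $e^{\alpha r_0 - 8}>2>1+e^{-2}$. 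Therefore $g$ has a unique root $X_0$ in the interval $[e^{-\alpha r_0/4}, e^{-2}]$, and $g(X)\geq 0 \iff X\geq X_0$. Translating back, this gives $A\leq -\frac{4\pi}{\alpha^2}\log X_0 = A_0$, and the strict inclusions $X_0 \in (e^{-\alpha r_0/4}, e^{-2})$ yield the strict bounds $\frac{8\pi}{\alpha^2} < A_0 < \frac{\pi r_0}{\alpha}$. Hence (C1) is fully satisfied on the interval $\left[\frac{8\pi}{\alpha^2}, A_0\right]$.

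Second, I would cover the complementary range below $\frac{8\pi}{\alpha^2}$ using (C2). Since $\alpha r_0 > 8$, we have $\min\{\pi r_0/\alpha, 8\pi/\alpha^2\} = 8\pi/\alpha^2$. Setting $t = \alpha^2 A/(8\pi) \in (0,1)$, the inequality of (C2) becomes $t^2(e^{\alpha r_0-8t}-1) \geq 1/e^2$. For $t\leq 1$ we have $e^{-8t}\geq e^{-8}$, so a sufficient condition is $t^2(e^{\alpha r_0-8}-1)\geq 1/e^2$, i.e. $t\geq \frac{1}{e\sqrt{e^{\alpha r_0-8}-1}}$. Note $e^{\alpha r_0-8}-1>1$ by the assumption on $\alpha r_0$, so this lower bound is well-defined and strictly less than $1$, and (C2) is satisfied on a nonempty subinterval $\left[B,\, \frac{8\pi}{\alpha^2}\right)$ whose explicit left endpoint is of the form $B = \frac{C\pi}{\alpha^2\sqrt{e^{\alpha r_0-8}-1}}$ with some absolute constant $C$ (matching, up to this constant, the lower bound stated in the corollary).

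Concatenating the two ranges, Theorem \ref{thm-main} yields that $\Lambda_A$ is the unique minimizer of $E_{\alpha,r_0}$ in $\mathcal{L}_2^\circ(A)$, up to rotation, for every $A$ in $[B, A_0]$, which is the desired interval. The main technical obstacles are purely elementary but need care: (i) verifying that the root $X_0$ of $g$ actually lies in the claimed interval, and that this determines $A_0$ as the right endpoint making (C1) sharp; (ii) checking that the crude bound $e^{-8t}\geq e^{-8}$ is sharp enough to produce a non-empty overlap at $A = \frac{8\pi}{\alpha^2}$, which is exactly why the threshold $\alpha r_0 > 8+\log 2$ is imposed.
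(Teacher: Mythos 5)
Your proposal is correct and takes essentially the same route as the paper, which proves the corollary by combining two lemmas: condition (C1) is reduced via the substitution $X=e^{-\alpha^2 A/(4\pi)}$ to the sign of $P(X)=e^{\alpha r_0}X^4-X-1$ on $\left[e^{-\alpha r_0/4},e^{-2}\right]$, yielding the right endpoint $A_0$, while condition (C2) is verified below $\frac{8\pi}{\alpha^2}$ using the crude bound $e^{-\alpha^2 A/\pi}\geq e^{-8}$. Your (C2) step is in fact marginally sharper than the paper's (your sufficient lower bound carries an extra factor $\frac{1}{e}$, i.e. $C=8/e$ rather than $C=8$), and since it lies below the corollary's stated lower bound, the concatenation of the two ranges goes through exactly as in the paper.
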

\begin{remark}
We notice that, by assumption, $A_0\geq \frac{8\pi}{\alpha^2}$, and therefore the optimality of the triangular lattice at very high density is not proved. We also remark that it is possible to reach any small value of $A$ by increasing $\alpha$ (or $r_0$) and such that $\Lambda_A$ is minimal in $\mathcal{L}_2^\circ(A)$ for $E_{\alpha,r_0}$.
\end{remark}

This result ensures the possibility to get a triangular lattice as a ground-state at a certain density (and in the neighborhood of it), using a Morse potential. However, it also shows the limits of our methods, as discussed in Section \ref{sec-limit}. In particular, we cannot get the optimality of the triangular lattice in the high density limit $A\to 0$ as it was the case for the Lennard-Jones type potentials in \cite{BetTheta15}, and as we have numerically checked it for $V_M$. A modification of the Morse potential is then proposed in Section \ref{sec-limit} in order to fill this gap. Furthermore, we also remark that the whole method developed in \cite{BetTheta15} cannot be applied for the Morse potential in order to show the global optimality, in $\mathcal{L}_2$, of a triangular lattice for $E_{\alpha,r_0}$. Indeed, it is straightforward to prove that any global minimizer of $E_{\alpha,r_0}$ on $\mathcal{L}_2$ has an area smaller than $r_0^2$, and it is not possible -- using our method -- to show that $\Lambda_A$ is the unique minimizer of $E_{\alpha,r_0}$ for any $A\in (0,r_0^2]$, whatever $\alpha$ and $r_0$ are.

\medskip

We have performed a numerical investigation of the minimizers of $E_{\alpha,r_0}$ in $\mathcal{L}_2^\circ(A)$ when $A$ varies, as it was done for the classical  Lennard-Jones potential
\begin{equation}\label{def-LJ}
V_{LJ}(r)=\frac{1}{r^{12}}-\frac{2}{r^6}
\end{equation}
in \cite{Beterloc}. Contrary to the latter, the lack of homogeneity of $V_M$ makes the systematic analysis of the local extrema of $E_{\alpha,r_0}$ very difficult, and we cannot find explicit analytic bounds for the local optimality of $\Lambda_A$ and the square lattice $\sqrt{A}\Z^2$ depicted in Figure \ref{fig-lattice2d}. We only propose, in Lemma \ref{lem:asympttriangular}, an asymptotic result for the local minimality of $\Lambda_A$ for small values of $A$. The details of our numerical study can be seen in Section \ref{sec-numeric2d}, based in particular on the minimization among rectangular lattices $\sqrt{A} L_y$  (i.e. its unit cell is a rectangle with sides of length $y^{\pm 1/2}$) and rhombic lattices $\sqrt{A}L_\theta$ (i.e. its unit cell is a rhombus with smallest angle $\theta$) where $L_y,L_\theta\in \mathcal{L}_2^\circ(1)$ are respectively defined by
\begin{align}
 &L_y:=\Z\left(\frac{1}{\sqrt{y}},0  \right)\oplus \Z \left(0,\sqrt{y} \right)\quad y\geq 1;\label{rectangular}\\
& L_\theta=\Z u_\theta \oplus \Z v_\theta,  \quad |u_\theta|=|v_\theta|, \quad (\widehat{u_\theta,v_\theta})=\theta, \quad \frac{\pi}{3}\leq \theta\leq \frac{\pi}{2}. \label{rhombic}
\end{align}
The two most important observations are the following:
\begin{enumerate}
\item Global minimizer: for all $\alpha,r_0$, the global minimizer of $E_{\alpha,r_0}$ in $\mathcal{L}_2$ seems to be a triangular lattice, as it seems to be the case for Lennard-Jones type potentials (see \cite{BetTheta15,Beterloc,OptinonCM}).
\item Confirmation of our conjecture: the minimizer's transition with respect to $A$ follows the same law as the one we have observed for the classical Lennard-Jones potential in \cite{Beterloc} This supports a conjecture we have stated in \cite[Sec. 5.4]{Beterloc} and where the same phenomenon is expected for all difference of completely monotone functions having only one well. More precisely, as $A$ increases, the evolution of the minimizer of $E_{\alpha,r_0}$ in $\mathcal{L}_2^\circ(A)$ is depicted in Table \ref{table-conj}, where the case $\alpha=6,r_0=1$ has been chosen, and compared to the classical Lennard-Jones case (in such a way that $f''(1)=V_{LJ}''(1)$ and both have their absolute minimum for $r=1$). We observe a transition triangular-rhombic-square-rectangular for the minimizer of $E_{\alpha,r_0}$ in $\mathcal{L}_2^\circ(A)$, as $A$ increases, which seems to stay true for other values of $(\alpha,r_0)$ (like $\alpha=r_0=3$ for instance). As $A\to +\infty$, the minimizer is rectangular and becomes more and more thin. More details are given in Section \ref{subsec-Conj}. This minimizer's behaviour is also similar to the one appearing in the two-components Bose-Einstein Condensates described by Ho and Mueller in \cite[Fig. 1 and 2]{Mueller:2002aa}. The same phenomenon is also naturally expected in other physical and biological models involving infinite lattices and competitive interactions.
\end{enumerate}

\begin{table}\label{table-conj}
\begin{center}
\begin{tabular}[H]{|c|c|}
\hline
\textbf{Area $A$} & \textbf{Minimizer in $\mathcal{L}_2^\circ(A)$}\\
\hline
(M): $A< 1.1560011044$ & Triangular\\
(LJ): $A< 1.138$ & \\
\hline
(M): $1.1560011044<A<1.1560011045$ & Rhombic\\
(LJ): $1.138<A<1.143$ & \\
\hline
(M): $1.1560011045<A<1.291$ & Square\\
(LJ): $1.143<A<1.268$ & \\
\hline
(M): $A>1.291$ & Rectangular\\
(LJ): $A>1.268$ & (more and more thin as $A\to +\infty$)\\
\hline
\end{tabular}
\caption{Shape of the minimizer with respect to the area $A$ (numerical values) for Morse with $\alpha=6$,$r_0=1$ and Lennard-Jones potential \eqref{def-LJ} for which the values are taken from \cite{Beterloc}.}
\end{center}
\end{table}

In particular, we notice that the triangular and square lattices are the only one staying critical points of $E_f$ in $\mathcal{L}_2^\circ(A)$ for $A$ in some open intervals. We call these lattices `volume-stationary' and we have also remarked this phenomenon in \cite{Beterloc} for the classical Lennard-Jones potential. The next result, based on Gruber's result \cite[Cor 2]{Gruber}, shows that this property has a certain universality. 

\begin{thm}[Volume-stationary lattices in dimension $2$]\label{prop:stat2d}
Let $d=2$ and $f:(0,+\infty)\to \R$ be a nonzero function such that
\begin{enumerate}
\item as $r\to +\infty$, we have $f(r)=O(r^{-d/2-\eta})$ for some $\eta>0$,
\item for any $r>0$, it holds $\displaystyle f(r)=\int_0^{+\infty} e^{-rt} d\mu_f(t)$ for some Borel measure $\mu_f$ on $(0,+\infty)$.
\end{enumerate}
For any $L\in \mathcal{L}_d$, we define 
$$
E_f[L]:=\sum_{p\in L\backslash \{0\}} f(|p|^2),
$$
and let $L_0\in \mathcal{L}_d^\circ(1)$. There exists an open interval $I$ such that $\nabla_{L_0} E_f[\sqrt{A} L_0]=0$ for all $A\in I$ if and only if $L_0\in \{\Z^2,\Lambda_1\}$, thus $I=(0,+\infty)$.
\end{thm}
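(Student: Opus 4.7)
The plan is to reduce the statement to a property of the lattice theta function $\theta_{L_0}$ from \eqref{def-thetafunction} and then invoke Gruber's Corollary 2 from \cite{Gruber}, which in dimension $2$ characterizes the lattices in $\mathcal{L}_2^\circ(1)$ that are simultaneous critical points of $L \mapsto \theta_L(\alpha)$ for every $\alpha > 0$ as exactly $\{\Z^2, \Lambda_1\}$. The ``if'' direction follows from a standard symmetry argument: if $L_0 \in \{\Z^2, \Lambda_1\}$, then $L_0$ admits a rotational symmetry of order $4$ or $6$ (respectively) that acts on the 2-dimensional tangent space $T_{L_0} \mathcal{L}_2^\circ(1)$ with no nonzero fixed vector. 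Since $E_f[\sqrt{A}\,\cdot\,]$ is $O(2)$-invariant, its gradient at $L_0$ is fixed by this symmetry and hence vanishes for every $A>0$, so $I = (0,+\infty)$ works.

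For the converse direction, I would use hypothesis $(2)$ and Fubini to write
$$
E_f[\sqrt{A} L_0] = \int_0^{+\infty} \bigl( \theta_{L_0}(tA/\pi) - 1 \bigr)\, d\mu_f(t),
$$
and differentiate under the integral sign in a tangent direction $H \in T_{L_0} \mathcal{L}_2^\circ(1)$ --- which is justified by the rapid decay of $D_H \theta_{L_0}(\alpha)$ both as $\alpha \to +\infty$ (directly) and as $\alpha \to 0^+$ (via the modular functional equation for $\theta_{L_0}$). This produces
$$
D_H E_f[\sqrt{A} L_0] = \int_0^{+\infty} D_H \theta_{L_0}(tA/\pi)\, d\mu_f(t),
$$
a real-analytic function of $A > 0$, so that vanishing on the open interval $I$ extends to vanishing on all of $(0,+\infty)$.

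The central and hardest step is then to extract from this the pointwise-in-$\alpha$ vanishing of $D_H \theta_{L_0}(\alpha)$ for every $\alpha > 0$. I would carry this out by a Fourier argument: the substitution $A = \pi e^a$, $t = e^y$ converts the identity into an additive convolution $\Phi \ast \check{\tilde{\mu}} \equiv 0$ on $\R$, where $\Phi(x) := D_H \theta_{L_0}(e^x)$ and $\tilde{\mu}$ is the pushforward of $\mu_f$ under the logarithm. From the modular equation together with the prefactor $-\pi\alpha$ in the expansion of $D_H \theta_{L_0}(\alpha)$, one finds that $\Phi(x)$ decays super-exponentially both as $x \to +\infty$ and $x \to -\infty$, so $\widehat{\Phi}$ extends to an entire function on $\C$. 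Since $\tilde{\mu} \neq 0$ (as $f \not\equiv 0$ implies $\mu_f \neq 0$), the distributional identity $\widehat{\Phi} \cdot \widehat{\check{\tilde{\mu}}} \equiv 0$ combined with the discreteness of the zeros of $\widehat{\Phi}$ (if nonzero) forces $\widehat{\Phi} \equiv 0$; this yields $\Phi \equiv 0$, i.e.\ $D_H \theta_{L_0} \equiv 0$ for every tangent $H$, and Gruber's Corollary 2 then forces $L_0 \in \{\Z^2, \Lambda_1\}$. The main technical obstacle is precisely this Fourier step, including the verification that $\widehat{\tilde{\mu}}$ cannot annihilate the entire function $\widehat{\Phi}$ on any nonempty open set, which rests on $\tilde{\mu}$ being a nonzero positive Borel measure on $\R$.
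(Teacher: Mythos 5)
Your overall skeleton coincides with the paper's proof of the more general Theorem \ref{thm:eutactic}: write $E_f[\sqrt{A}L_0]=\int_0^{+\infty}\bigl(\theta_{L_0}(At/\pi)-1\bigr)\,d\mu_f(t)$, differentiate under the integral, use analyticity in $A$ to upgrade vanishing on $I$ to vanishing on $(0,+\infty)$, reduce to criticality for the theta function, and invoke Gruber; your symmetry argument for the ``if'' direction (order-$4$, resp.\ order-$6$, point group acting on the tangent space with no nonzero fixed vector) is a fine substitute for the paper's citation of \cite[Thm 4.4]{CoulSchurm2018}. One small misattribution: Gruber's Corollary 2 characterizes lattices that are critical for the \emph{Epstein zeta function} $\zeta_L(s)$ for all $s>d$, not for the theta function; the paper bridges this by integrating $\nabla_{L_0}\theta_{L_0}$ against $t^{s/2-1}\,dt$ to pass from a.e.-in-$\alpha$ theta criticality to zeta criticality for all $s>d$, and you would need that extra Mellin step as well.

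The genuine gap is in your central Fourier step, in three respects. First, after the substitution $t=e^y$ the measure $\tilde{\mu}$ typically grows exponentially (already $f_s(r)=r^{-s/2}$ gives density proportional to $e^{sy/2}$), so $\tilde{\mu}$ is not a tempered distribution and $\widehat{\check{\tilde{\mu}}}$ is simply undefined; worse, hypothesis (2) permits $\mu_f$ with, say, $e^{\sqrt{t}}$-type growth at infinity, for which $\int t^{\sigma}\,d|\mu_f|(t)=\infty$ for \emph{every} real $\sigma$, so not even a bilateral Laplace/Mellin transform of $\tilde{\mu}$ exists in any complex strip and the Fubini argument needed to factor the convolution fails. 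Second, your concluding logic is inverted: even where the product identity $\widehat{\Phi}\cdot\widehat{\check{\tilde{\mu}}}\equiv 0$ makes sense, with $\widehat{\Phi}$ entire and not identically zero it only forces the support of $\widehat{\check{\tilde{\mu}}}$ into the discrete real zero set of $\widehat{\Phi}$ --- it constrains $\tilde{\mu}$, not $\Phi$ --- and measures with discretely supported Fourier transform do exist (Lebesgue measure, i.e.\ $d\mu_f=dt/t$, has Fourier transform $\delta_0$); the workable version of this argument would instead use discreteness of the zeros of the \emph{analytic continuation of the transform of $\tilde{\mu}$} in a strip, whose existence requires moment conditions the hypotheses do not provide. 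Third, the positivity of $\tilde{\mu}$ on which your final step ``rests'' is unavailable in the intended scope: the theorem is designed to cover the Morse potential, whose $\mu_f$ in \eqref{eq:laplacetransformMorse} changes sign at $y=3\alpha/(4r_0)$, so $\tilde{\mu}$ is signed. To be fair, the paper itself passes over this deconvolution step with a bare ``it follows that $\nabla_{L_0}\theta_{L_0}(t/\pi)=0$ for almost every $t>0$'', so you have correctly located the delicate point of the argument; but your proposed justification does not close it as written.
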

\begin{remark}
The same result is stated in dimension $d=3$ in Theorem \ref{prop:stat3d}, and a more general result (see Theorem \ref{thm:eutactic}) will be proved for $d$-dimensional lattice energies in Section \ref{sec:stat}: all the layers of such lattices $L_0$ are strongly eutactic in the sense of Definition \ref{def:eutactic}. Note that the Morse potential $f$ satisfies the assumption of the theorem because $\mu_f$ is defined by \eqref{eq:laplacetransformMorse} (as well as any Lennard-Jones type potential), thus the result is actually true for $E_{\alpha,r_0}$ (and also for the Lennard-Jones energy studied in \cite{Beterloc}). 
\end{remark}

In dimension 3, the problem is obviously more difficult. The space $\mathcal{L}_3^\circ(V)$ of Bravais lattices with fixed volume $V$ is a five-dimensional space and only local optimality results have been proved for usual interaction potentials \cite{Ennola,BeterminPetrache,Beterminlocal3d}, even for the completely monotone potentials (see e.g. \cite{BetSoftTheta} for a review in the soft lattice theta function case). The only global minimality results is proved by Sarnak and Str\"ombergsson in \cite{SarStromb} for the height of the three-dimensional torus (i.e. the derivative of the Epstein zeta function at the origin), using a computer assisted proof. The exact formula for the partial derivatives of $E_{\alpha,r_0}$ are known (see e.g. \cite{Beterminlocal3d}), but their systematic analysis is again very difficult, by lack of homogeneity of $f$ which contains exponential terms. The four important (for us, in this context) three-dimensional lattices of unit density are the simple cubic lattice $\Z^3\in \mathcal{L}^\circ_3(1)$, Face-Centred-Cubic (FCC) lattice $\mathsf{D}_3\in \mathcal{L}^\circ_3(1)$, Body-Centred Cubic (BCC) lattice $\mathsf{D}_3^*\in \mathcal{L}^\circ_3(1)$ and the Hexagonal-Close-Packing (HCP) structure $\hcp\in \mathcal{P}_3\backslash \mathcal{L}_3$ depicted in Figure \ref{fig-latticethreedimensions}, and defined by
\begin{align}
 &\Z^3=\Z(1,0,0)\oplus \Z(0,1,0)\oplus \Z(0,0,1);\label{cubic}\\
 &\mathsf{D}_3:=2^{-\frac{1}{3}}\left[\Z(1,0,1)\oplus \Z(0,1,1)\oplus \Z(1,1,0)  \right];\label{FCC}\\
 &\mathsf{D}_3^*:=2^{\frac{1}{3}}\left[\Z(1,0,0)\oplus \Z(0,1,0)\oplus \Z\left(\frac{1}{2},\frac{1}{2},\frac{1}{2}  \right)  \right]; \label{BCC}\\
&\hcp:= L\cup \left( L+\left(\frac{1}{2},\frac{1}{\sqrt{12}},\sqrt{\frac{2}{3}}  \right) \right),\quad  L:=\Z(1,0,0)\oplus \Z\left(\frac{1}{2},\frac{\sqrt{3}}{2},0 \right)\oplus \Z\left( 0,0,\sqrt{\frac{8}{3}} \right).\label{HCP}
\end{align}
\begin{figure}[!h]
\centering
\includegraphics[width=3cm]{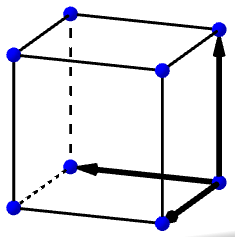} \quad \includegraphics[width=3cm]{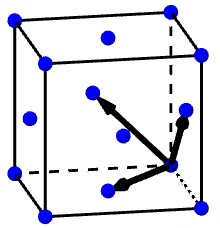}\quad \includegraphics[width=3cm]{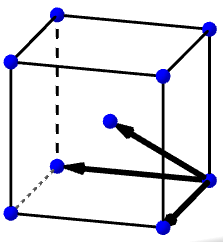}\quad \includegraphics[width=4cm]{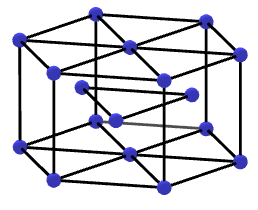}
 \caption{Three-dimensional periodic lattices. The cubic lattice $\Z^3$, the FCC lattice $\mathsf{D}_3$, the BCC lattice $\mathsf{D}_3^*$ and the HCP structure $\hcp$.}\label{fig-latticethreedimensions}
\end{figure}

Using the formula proved in \cite{Beterminlocal3d}, we have numerically studied the local optimality of the cubic lattices $\Z^3$, $\mathsf{D}_3$ and $\mathsf{D}_3^*$ for $E_f$ in $\mathcal{L}_3^\circ(V)$, when $V$ varies. The results, presented in Table \ref{table-3dcubic}, are similar with the one found for the classical Lennard-Jones potential in \cite{Beterminlocal3d}. In particular, we also observe that these cubic lattices are the only one staying critical points for $E_f$ in some open intervals of volumes $V$. It turns out that this phenomenon, as the one observed in dimension $2$ for $\Z^2$ and $\Lambda_1$ and proved in Theorem \ref{prop:stat2d}, is also universal as shown in the next result, again based on Gruber's result \cite[Cor 2]{Gruber}, and generalized in Section \ref{sec:stat}.

\begin{thm}[Volume-stationary lattices in dimension $3$]\label{prop:stat3d}
Let $f$ and $E_f$ be defined as in Theorem \ref{prop:stat2d} with $d=3$. A Bravais lattice $L_0\in \mathcal{L}_3^\circ(1)$ satisfies $\nabla_{L_0} E_f[V^{1/3} L_0]=0$ for all $V\in I$ where $I$ is an open interval if and only if $L_0\in \{\Z^3,\mathsf{D}_3,\mathsf{D_3}^*\}$, thus $I=(0,+\infty)$.
\end{thm}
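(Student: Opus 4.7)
The plan is to convert the gradient of $E_f$ into a gradient of lattice theta functions via the Laplace-transform hypothesis on $f$, and then invoke Gruber's classification of Bravais lattices whose layers (orbits of equidistant vectors) are all strongly eutactic. Since $\theta_{V^{1/3}L_0}(\alpha)=\theta_{L_0}(V^{2/3}\alpha)$, hypothesis (2) gives
$$E_f[V^{1/3}L_0]=\sum_{p\in L_0\setminus\{0\}}f\bigl(V^{2/3}|p|^2\bigr),$$
so for every tangent direction $h$ to $\mathcal{L}_3^\circ(1)$ at $L_0$, exchanging sum and derivative (legitimate thanks to the decay in (1)) yields
$$h\cdot\nabla_{L_0}E_f[V^{1/3}L_0]=V^{2/3}\sum_{p\in L_0\setminus\{0\}}f'\bigl(V^{2/3}|p|^2\bigr)\bigl(h\cdot\nabla_{L_0}|p|^2\bigr).$$
This identity is the bridge between the stationarity condition and the layer structure of $L_0$.

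The easy ($\Leftarrow$) direction is then straightforward. For $L_0\in\{\Z^3,\mathsf{D}_3,\mathsf{D}_3^*\}$, each layer $L_k$ is strongly eutactic, which at the level of tangent vectors in the fixed-volume shape space translates into $\sum_{p\in L_k}(h\cdot\nabla_{L_0}|p|^2)=0$ for every volume-preserving direction $h$. Grouping the identity above by layers makes each group vanish term by term, so the gradient is identically zero on $(0,+\infty)$, giving in particular $I=(0,+\infty)$.

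For the hard ($\Rightarrow$) direction, I first observe that $V\mapsto h\cdot\nabla_{L_0}E_f[V^{1/3}L_0]$ is real-analytic on $(0,+\infty)$ (being built from the Laplace transform $f$), so vanishing on any open interval forces vanishing on the whole ray. Grouping again by layers of common squared radius $d_k^2$ yields
$$\sum_k \Sigma_k(h)\,f'\bigl(V^{2/3}d_k^2\bigr)=0\quad\text{for all }V>0,\qquad \Sigma_k(h):=\sum_{p\in L_k}\bigl(h\cdot\nabla_{L_0}|p|^2\bigr).$$
The main obstacle will be deducing $\Sigma_k(h)\equiv 0$ from this identity. The functions $V\mapsto f'(V^{2/3}d_k^2)$ attached to distinct radii $d_k$ must be shown linearly independent; I would justify this via Laplace-transform uniqueness, since the representation of $f'$ coming from $\mu_f$ endows each summand with the exponential rate $e^{-tV^{2/3}d_k^2}$ and distinct $d_k$ produce distinct asymptotic decay as $V\to+\infty$, an argument that carries over to the signed-measure setting relevant to the Morse case. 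Once $\Sigma_k(h)=0$ for every $h$ and every $k$, each layer of $L_0$ is strongly eutactic, and Gruber's Corollary 2 \cite[Cor 2]{Gruber} (whose refinement is Theorem \ref{thm:eutactic} below) identifies the only three-dimensional Bravais lattices with this property as $\Z^3$, $\mathsf{D}_3$, and $\mathsf{D}_3^*$, completing the characterization.
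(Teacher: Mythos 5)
Your skeleton (Laplace representation, analytic continuation of the gradient to force $I=(0,+\infty)$, layer decomposition, Gruber's classification) matches the paper's overall strategy, and your ($\Leftarrow$) direction is sound: strong eutaxy of a layer $M$ at radius $\lambda$ gives $\sum_{p\in M}pp^T=\frac{2k\lambda^2}{3}I$, so each $\Sigma_k(h)$ vanishes against trace-free deformations (the paper instead cites \cite[Thm 4.4]{CoulSchurm2018} for this). The genuine gap is exactly at the step you flag as the main obstacle in ($\Rightarrow$): the functions $V\mapsto f'\bigl(V^{2/3}d_k^2\bigr)$ attached to distinct radii are \emph{not} linearly independent for all $f$ in the admissible class, and the asymptotic-decay justification fails. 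Concretely, $f_s(r)=r^{-s/2}$ with $s>3$ satisfies hypotheses (1)--(2) (with $d\mu_{f_s}(t)=t^{s/2-1}dt/\Gamma(s/2)$) --- the paper itself invokes these $f_s$ --- and then $f_s'\bigl(V^{2/3}d_k^2\bigr)=-\tfrac{s}{2}V^{-(s+2)/3}d_k^{-(s+2)}$: every layer contributes the \emph{same} function of $V$ up to a constant, since the decay rate as $V\to+\infty$ is governed by the behaviour of $\mu_f$ near $t=0$, not by $d_k$. Your identity for all $V>0$ then collapses to the single scalar equation $\sum_k\Sigma_k(h)\,d_k^{-(s+2)}=0$, from which $\Sigma_k(h)=0$ for each $k$ cannot be extracted. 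For the Morse potential the stretched-exponential rates $e^{-\alpha V^{1/3}d_k}$ do separate, but even there you would still need a dominant-balance induction to peel off coefficients from the \emph{infinite} sum over layers, which you do not supply.

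The paper circumvents this by never trying to separate layers for the single given $f$. From $\nabla_{L_0}E_f[V^{1/3}L_0]=\int_0^{+\infty}\nabla_{L_0}\theta_{L_0}\bigl(V^{2/3}t/\pi\bigr)\,d\mu_f(t)$ and the same analyticity argument, it deduces that $L_0$ is a critical point of the lattice theta function $L\mapsto\theta_L(\alpha)$ for almost every $\alpha>0$; integrating this against $\mu_{f_s}$ then makes $L_0$ a critical point of the Epstein zeta function $L\mapsto\zeta_L(s)$ for \emph{every} $s>3$, and Gruber's Corollary 1 \cite{Gruber} classifies lattices stationary for all $s$ as exactly those with all layers strongly eutactic, with Corollary 2 giving the list $\{\Z^3,\mathsf{D}_3,\mathsf{D}_3^*\}$ in dimension $3$. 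In other words, the coefficient extraction you attempt for one fixed $f$ is replaced by Dirichlet-series uniqueness across the whole continuum of exponents $s>3$, which is where the layer-wise vanishing actually comes from; to repair your route you would either have to restrict the class of $f$ (e.g. $\mu_f$ supported in $[t_0,+\infty)$ with $t_0>0$, so that the rates genuinely separate, plus the missing induction over layers) or pass, as the paper does, from the single $f$ to the full family of zeta functions.
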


\medskip

According to \cite{ModifMorse}, it is important to notice that the global minimum for the three-dimensional Morse lattice energy in $\mathcal{P}_3$ should be an HCP structure, for large parameter $\alpha$, which does not belong to the class of Bravais lattice we are interested in. This also holds for the Lennard-Jones potential for large exponents (see e.g. \cite{ModifMorse,OptinonCM}). Even though $\mathcal{P}_3\ni \hcp\not \in \mathcal{L}_3$, using formula \eqref{HCP} we have computed the numerical values of $E_{\alpha,r_0}[\lambda \hcp]$.

\medskip

Surprisingly, we numerically found the following transition of minimizers. Defining 
$$
H_\alpha:=\min_\lambda E_{\alpha,1}[\lambda \hcp],\quad B_\alpha:=\min_\lambda E_{\alpha,1}[\lambda \mathsf{D}_3^*],\quad F_\alpha:=\min_\lambda E_{\alpha,1}[\lambda \mathsf{D}_3],$$ 
we obtain:
\begin{enumerate}
\item If $\alpha\leq \alpha_0$, where $\alpha_0\in (3.05,3.06)$, then $\displaystyle B_\alpha <F_\alpha <H_\alpha$;
\item if $\alpha_0<\alpha<\alpha_1$, where $\alpha_1\in (3.54,3.55)$, then $ \displaystyle F_\alpha <H_\alpha<B_\alpha $;
\item if $\alpha\geq \alpha_1$, then $\displaystyle H_\alpha <F_\alpha <B_\alpha$.
\end{enumerate}
It turns out that the multiple stable structures for Morse potential, Modified Morse potential and Lennard-Jones potential have been numerically studied in \cite[Fig. 5]{ModifMorse}. Then, we propose the following new conjecture for the Morse potential lattice energy that will be heuristically justified (for the BCC/FCC transition) in Section \ref{sec-conj3d}, based on Sarnak-Str\"ombergsson conjecture \cite[Eq. (43)]{SarStromb} for the lattice theta functions defined by \eqref{def-thetafunction}.

\begin{Conjecture}[Global minimizer of Morse potential lattice energy]\label{ConjMorse3d} Let $r_0>0$, then there exists $\alpha_0$, $\alpha_1$ such that 
\begin{enumerate}
\item If $\alpha<\alpha_0$, then the global minimizer of $E_{\alpha,r_0}$ in $\mathcal{P}_3$ is a BCC lattice.
\item If $\alpha>\alpha_1$, then the global minimizer of $E_{\alpha,r_0}$ in $\mathcal{L}_3$  (resp. in $\mathcal{P}_3$) is a FCC lattice (resp. a HCP structure).
\item If $\alpha_0<\alpha<\alpha_1$, then the unique minimizer of $E_{\alpha,r_0}$ in $\mathcal{P}_3$ is a FCC lattice.
\end{enumerate}
\end{Conjecture}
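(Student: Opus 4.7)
The plan is to reduce the three-dimensional minimization to a comparison of lattice theta functions via a subordinator integral representation, and then to apply the Sarnak--Str\"ombergsson conjecture together with numerical input on the HCP--FCC gap.

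First, I would use the subordinator identity
$$e^{-\alpha\sqrt{r}} = \frac{\alpha}{2\sqrt{\pi}}\int_0^{+\infty} t^{-3/2} e^{-\alpha^2/(4t)}\, e^{-tr}\,dt,$$
applied to each of the two exponential terms defining $V_M$, to rewrite
$$E_{\alpha,r_0}[L] = (e^{\alpha r_0}-2) + \int_0^{+\infty} K_{\alpha,r_0}(t)\,\bigl(\theta_L(t/\pi)-1\bigr)\,dt,$$
with explicit signed kernel
$$K_{\alpha,r_0}(t) = \frac{\alpha}{\sqrt{\pi}}\,t^{-3/2}\bigl(e^{\alpha r_0} e^{-\alpha^2/t} - e^{-\alpha^2/(4t)}\bigr),$$
which is negative for $t < 3\alpha/(4r_0)$ (attractive long-range regime) and positive for $t > 3\alpha/(4r_0)$ (repulsive short-range regime). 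This converts the Morse energy into a signed spectral integral against the lattice theta function, placing the problem in the natural setting of Theorem \ref{prop:stat3d} and its proof ingredients.

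Next, for each candidate structure $L_0 \in \{\mathsf{D}_3^*, \mathsf{D}_3, \hcp\}$ and each $\alpha$, I would first optimize the overall scale $\lambda$ via $\partial_\lambda E_{\alpha,r_0}[\lambda L_0] = 0$, using $\theta_{\lambda L_0}(u) = \theta_{L_0}(\lambda^2 u)$, to obtain $B_\alpha$, $F_\alpha$, $H_\alpha$, and then compare the three. The core input is the Sarnak--Str\"ombergsson conjecture \cite[Eq.~(43)]{SarStromb}: in $\mathcal{L}_3^\circ(1)$ there is a critical $u_c$ such that $\mathsf{D}_3^*$ minimizes $\theta_L(u)$ for $u < u_c$ and $\mathsf{D}_3$ minimizes it for $u > u_c$. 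Feeding this into the integral representation, the sign-change location $3\alpha/(4r_0)$ together with the optimal scale determine which theta-function regime dominates the Morse energy, producing a BCC/FCC transition at some $\alpha_0$. For the HCP claim, I would exploit the two-point-per-cell decomposition of $\theta_{\hcp}$ as the sum of a Bravais theta function and a shifted theta series, together with the numerically established fact that $\theta_{\hcp}(u) < \theta_{\mathsf{D}_3}(u)$ for $u$ beyond a threshold, which after scale optimization is activated only once $\alpha$ is sufficiently large.

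The main obstacle is threefold. First, the Sarnak--Str\"ombergsson conjecture is itself open, so the entire argument is inherently conditional. Second, a rigorous separation of $H_\alpha$ from $F_\alpha$ requires fine asymptotic control of $\theta_{\hcp}(u)-\theta_{\mathsf{D}_3}(u)$ at large $u$, for which closed-form proofs are currently unavailable beyond numerics and nearest-shell heuristics. Third---and most seriously---passing from $\mathcal{L}_3$ to the much larger class $\mathcal{P}_3$ of periodic configurations removes the quadratic-form reduction theory underlying all standard lattice-energy arguments: one would need either a stratification restricting optimal periodic configurations to a finite candidate family, or a new geometric rigidity statement for short-range Morse-type potentials. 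This is precisely why the authors state the result only as a conjecture, supported by numerics and the partial heuristic to be given in Section~\ref{sec-conj3d}.
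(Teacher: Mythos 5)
This statement is a conjecture: the paper offers no proof, only numerical evidence (Section \ref{sec-numeric3d}, where Lemma \ref{lem-dilationenergy} certifies that the scale minimizations behind $B_\alpha$, $F_\alpha$, $H_\alpha$ locate genuine minima) and a heuristic in Section \ref{sec-conj3d}. At the level of strategy your proposal matches that program --- subordination to lattice theta functions, Sarnak--Str\"ombergsson input, numerics for the HCP comparison --- and your diagnosis of why this can only be a conjecture (the conditional SS input, and above all the passage from $\mathcal{L}_3$ to $\mathcal{P}_3$, which the paper itself flags as the hardest part) is exactly right.

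However, two of your concrete steps deviate from the paper in ways that break even the heuristic. First, your single signed-kernel representation cannot absorb the SS conjecture directly: where $K_{\alpha,r_0}(t)<0$, i.e.\ $t<3\alpha/(4r_0)$, a pointwise \emph{minimizer} of $L\mapsto\theta_L(t/\pi)$ is precisely what you do not want --- on the negative part of the kernel you would need a maximizer --- so ``feeding in'' the theta minimizer regime-by-regime is invalid. The paper avoids this by splitting $E_{\alpha,r_0}[L]=e^{\alpha r_0}F_{2\alpha}[L]-2F_\alpha[L]$ into two completely monotone pieces with \emph{positive} weights, and, crucially, applying the Jacobi transformation formula to each, yielding $F_\alpha[L]=\frac{8\pi}{\alpha^3}\int_0^{+\infty}\theta_{L^*}\bigl(\frac{4\pi}{u\alpha^2}\bigr)e^{-1/u}u^{-3}\,du$ with weight concentrated on a compact set; large $\alpha$ then probes $\theta_{L^*}$ at small arguments, exactly the regime $\beta<1$ covered by \cite[Eq.\ (43)]{SarStromb}, giving FCC via $L^*=\mathsf{D}_3^*$ and BCC for small $\alpha$ by the duality of \cite{Torquatoduality}, with the BCC part further corroborated by the Poisson-summation remark reducing the $\alpha\to0$ limit to the Epstein zeta function and \cite[Eq.\ (44)]{SarStromb}. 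Second, your HCP mechanism rests on a false fact: you posit $\theta_{\hcp}(u)<\theta_{\mathsf{D}_3}(u)$ beyond a threshold, whereas $\theta_{\mathsf{D}_3}(\alpha)<\theta_{\hcp}(\alpha)$ for \emph{all} $\alpha>0$ (see \cite[Ex.\ 2.6]{BeterminPetrache}, as recalled in Section \ref{sec-conj3d}); there is no theta crossover, and the paper attributes the HCP optimality at large $\alpha$ to the attractive--repulsive difference structure combined with the integral representation, not to a theta comparison. As stated, your argument for part 2 of the conjecture would therefore fail even heuristically.
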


This conjecture appears to be obviously more difficult to prove than the Sarnak-Str\"ombergsson conjectures for lattice theta functions and Epstein zeta functions described in  \cite[Eq. (43)-(44)]{SarStromb}, in particular for the minimization problem on $\mathcal{P}_3$.

\medskip

We have compared this conjecture to the values of $\alpha$ found from experimental quantities for metals in \cite{GirifalcoWeizer,LincolnetalMorse,SharmaKachhavaMorse,PamuketalMorse,HungetalMorse}. We observe that the real ground-states of metals match with the expected ground-state given by our conjecture, according to the experimental values of $\alpha$, only in few cases for both FCC and BCC structures. Details are given in Section \ref{sec-experiment}, but it clearly appears that the pure central-force model with two-body Morse potential is not sufficiently accurate to describe metals if we want to take into consideration that their FCC or BCC structure are not only local minimizers, but global minimizers. The same holds for rare-gas crystals when we compare the expected ground-state structure with the values of the parameters found in \cite{Raff1990,AlimietalMorse,ParsonMorse,BarkerMorse}.

\medskip

Another interesting fact is the similarity of the two and three-dimensional cases, respectively in the square/cubic cases and the triangular/FCC cases. This justifies the relevance of studying one (two-dimensional) layer instead of the whole crystal, even though the dimension reduction techniques described in \cite{BeterminPetrache} and following from the multiplicative property of the exponential function cannot be applied here. That was also observed in the Lennard-Jones case in \cite{Beterloc,Beterminlocal3d} and we expect this property to be true for many other repulsive-attractive potentials. It is by itself a good motivation to study two-dimensional lattice energies.

\medskip

\textbf{Plan of the paper.} We start in Section \ref{sec-triangoptimal} by showing Theorem \ref{thm-main} and Corollary \ref{thm-triang}, then explaining what are the limits of our method based on Montgomery result and finally proving the local optimality of the triangular lattice at high density. We then prove a generalization of Theorem \ref{prop:stat2d} and Theorem \ref{prop:stat3d} in Section \ref{sec:stat} about Bravais lattices being critical points of energies of type $E_f$ in $\mathcal{L}_d^\circ(V)$, for all $V$ in an open interval. The numerical investigation of the minimizers of $E_{\alpha,r_0}$ is explained in Section \ref{sec-numeric2d}, where the local optimality of the triangular and square lattices is studied and our Conjecture for competitive completely monotone functions is checked. In Section \ref{sec-numeric3d}, the three-dimensional minimization problem is numerically studied and heuristically justified. We also compare the experimental values of $\alpha$ with our Conjecture, explaining why the Morse potential is not a good candidate for describing metals or rare-gas crystals, in the central-force setting.

\section{Optimality of the triangular lattice in $\mathcal{L}_2^\circ(A)$: rigorous results}\label{sec-triangoptimal}

\subsection{Proof of Theorem \ref{thm-main}}

The goal is to study the minimization of $L\mapsto E_{\alpha,r_0}[L]$ in $\mathcal{L}^\circ_2(A)$ for fixed $A$. We want to use the method described in \cite{Betermin:2014fy,BetTheta15} based on Montgomery Theorem \cite[Thm. 1]{Mont} about the optimality of the triangular lattice for the lattice theta function $\theta_L$ defined by \eqref{def-thetafunction}. For that, we need to compute the inverse Laplace transform of $f(r)=e^{\alpha r_0} e^{-2\alpha \sqrt{r}}-2e^{-\alpha \sqrt{r}}$, which is given by
\begin{equation}\label{eq:laplacetransformMorse}
\mu_f(y):=\mathcal{L}^{-1}[f](y)=\frac{\alpha y^{-3/2}}{\sqrt{\pi}}\left( e^{\alpha r_0} e^{-\alpha^2/y}- e^{-\alpha^2/4y}\right).
\end{equation}
Therefore, we now use \cite[Thm 1.1]{BetTheta15} in order to get two sufficient conditions for the optimality of $\Lambda_A$ in $\mathcal{L}_2^\circ(A)$ and, studying the sign of $\mu_f$ and using \cite[Thm 1.5]{OptinonCM}, we get the non-optimality of the triangular lattice in $\mathcal{L}_2^\circ(A)$ for large $A$.

\begin{proof}[Proof of Theorem \ref{thm-main}]
We recall that, in \cite[Thm 1.1]{BetTheta15}, we have proved the following integral representation which we directly apply to $f$, for any $L\in \mathcal{L}_2^\circ(A)$ (note the tiny difference of definition for $\theta_L$ with \cite{BetTheta15} and the fact that the term $p=0$ has to be removed from the sum),
$$
E_{\alpha,r_0}[L]-f(0)=\frac{\pi}{A}\int_1^{+\infty}\left(\theta_L\left( \frac{y}{A} \right)-1 \right)g_A(y)dy + C_A, \quad g_A(y):=y^{-1}\mu_f\left( \frac{\pi}{yA}  \right)+\mu_f\left( \frac{\pi y}{A} \right) ,
$$
where $C_A$ is a constant independent of $L$, $\mu_f$ is the inverse Laplace transform of $f$ and $\theta_L$ is defined by \eqref{def-thetafunction}. In our case, for any $A>0$ and any $y\geq 1$,
\begin{equation}\label{Eq:uA}
g_A(y)=\frac{\alpha A}{\pi^2 y^{3/2}}\left[ e^{\alpha r_0} e^{-\frac{\alpha^2 yA}{\pi}}y^2 - e^{-\frac{\alpha^2 y A}{4\pi}}y^2 + e^{\alpha r_0} e^{-\frac{\alpha^2 A}{\pi y}} -e^{-\frac{\alpha^2 A}{4\pi y}}\right]=:\frac{\alpha A}{\pi^2 y^{3/2}} u_A(y).
\end{equation}
Therefore, if $g_A(y)\geq 0$ for almost every $y\geq 1$, then $\Lambda_A$ is the unique minimizer of $ E_{\alpha,r_0}$ in $\mathcal{L}_2^\circ(A)$. We now show that $(C1)$ and $(C2)$ are both sufficient conditions of the positivity of $g_A$ on $[1,+\infty)$. Let us write $C=e^{\alpha r_0}$ and $\beta=\frac{\alpha^2 A}{\pi}$. We therefore have, for any $y\geq 1$,
\begin{equation}\label{Eq:LBuA}
u_A(y)=Cy^2 e^{-\beta y}-y^2 e^{-\frac{\beta}{4}y} +Ce^{-\frac{\beta}{y}}-e^{-\frac{\beta}{4y}}\geq -y^2 e^{-\frac{\beta}{4}y}+Ce^{-\beta}-1=:h(y).
\end{equation}
We now assume that $Ce^{-\beta}-1>0$, i.e. $A<\frac{\pi r_0}{\alpha}$, otherwise $h$ is clearly negative. We remark that
$$
h(y)\geq 0 \iff g(y):=2\log y -\frac{\beta}{4}y-\log(Ce^{-\beta}-1)\leq 0.
$$
We compute $g'(y)=\frac{2}{y}-\frac{\beta}{4}$, and therefore $g$ is increasing on $[0,8/\beta]$ and decreasing on $[8/\beta,+\infty)$. We thus have two cases:
\begin{enumerate}
\item[(C1)] If $\beta\geq 8$, then $\max_{y\geq 1} g(y)=g(1)=-\frac{\beta}{4}-\log(Ce^{-\beta}-1)$. Therefore, $g(y)\leq 0$ for all $y\geq 1$ if and only if $-\frac{\beta}{4}-\log(Ce^{-\beta}-1)\leq 0$. We then have found that $(C1$) is a sufficient condition for $g_A$ to be positive on $[1,+\infty)$.
\item[(C2)] If $\beta<8$, then $\max_{y\geq 1} g(y)=g(8/\beta)=-2\log \beta -\log(Ce^{-\beta}-1)+2\log 8 -2$. Therefore, $g(y)\leq 0$ for all $y\geq 1$ if and only if $-2\log \beta -\log(Ce^{-\beta}-1)+2\log 8 -2\leq 0$. We then have found that $(C2)$ is a second sufficient condition for $g_A$ to be positive on $[1,+\infty)$.
\end{enumerate}
For the second part of the theorem, it is straightforward to show that $\mu_f(y)\geq 0$ if and only if $y\geq \frac{3\alpha}{4 r_0}$. Therefore, $\mu_f<0$ in a neighbourhood of the origin and, by \cite[Thm 1.5.(1)]{OptinonCM}, $\Lambda_A$ cannot be a minimizer of $E_{\alpha,r_0}$ in $\mathcal{L}_2^\circ(A)$ for sufficiently large $A$.
\end{proof}

\begin{remark}[Numerical values]\label{rem-numtri}
We can now compute the corresponding area bounds for the optimality of the triangular lattice. We fix $r_0=1$ and we compute these bounds for $\alpha \in \{1,2,...,10\}$. in Table \ref{table-C1C2}.

\begin{table}
\begin{center}
\begin{tabular}[H]{|c|c|c|}
\hline
\textbf{$\alpha$} & \textbf{Values of $A$ such that $\Lambda_A$ is minimal for $E_{\alpha,1}$} & \textbf{Condition(s) satisfied} \\
\hline
1 & $\emptyset$ & $\emptyset$\\
\hline
2 & $\emptyset$ & $\emptyset$\\
\hline
3 & $\emptyset$ & $\emptyset$\\
\hline
4 & $[0.1034,0.6782]$ & $(C2)$\\
\hline
5 & $[0.0351,0.5862]$ & $(C2)$\\
\hline
6 & $[0.0139,0.5034]$ & $(C2)$\\
\hline
7 & $[0.0060,0.4378]$ & $(C2)$\\
\hline
8 & $[0.0028,0.3862]$ & $(C2)$\\
\hline
9 & $[0.0013,0.3450]$ & $(C1)$,$(C2)$\\
\hline
10 & $[0.0007,0.3116]$ & $(C1)$,$(C2)$\\
\hline
\end{tabular}
\caption{For $r_0=1$, some values of $A$ such that $\Lambda_A$ is minimal according to Theorem \ref{thm-main}. We actually begin to have a solution for $(C2)$ for (approximatively) $\alpha \geq 3.078$.}\label{table-C1C2}
\end{center}
\end{table}

\end{remark}
\subsection{Proof of Corollary \ref{thm-triang}}
We now give sufficient conditions for $(C1)$ and $(C2)$ to be satisfied and we then prove Corollary \ref{thm-triang}.

\begin{lemma}[Condition for which $(C1)$ is satisfied]\label{lem-C1}
 Let $A_0=-\frac{4\pi}{\alpha^2}\log X_0$ where $X_0$ is the solution of $e^{\alpha r_0}X^4-X-1=0$ on $[e^{-\alpha r_0/4},e^{-2}]$. If $\alpha\geq \frac{8+\log(1+e^{-2})}{r_0}$ and $A$ is such that $\frac{8\pi}{\alpha^2}\leq A\leq A_0<\frac{\pi r_0}{\alpha}$, then $(C1)$ is satisfied.
\end{lemma}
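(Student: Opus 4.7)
The plan is to reduce the inequality in condition $(C1)$ to a single-variable polynomial inequality via an exponential substitution, and then locate the relevant root.

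First, I would set $X := e^{-\alpha^{2}A/(4\pi)}$, so that $X \in (0,1)$ and $A = -\frac{4\pi}{\alpha^{2}}\log X$. Under this change of variables, $e^{-\alpha^{2}A/\pi} = X^{4}$ and $e^{-\alpha^{2}A/(4\pi)} = X$, so the nontrivial inequality in $(C1)$,
$$
e^{\alpha r_{0}} e^{-\alpha^{2}A/\pi} - 1 \geq e^{-\alpha^{2}A/(4\pi)},
$$
becomes $\phi(X) \geq 0$, where $\phi(X) := e^{\alpha r_{0}} X^{4} - X - 1$. Note also that the two bounds $A < \pi r_{0}/\alpha$ and $A \geq 8\pi/\alpha^{2}$ translate into $X > e^{-\alpha r_{0}/4}$ and $X \leq e^{-2}$ respectively, and $A \leq A_{0}$ translates into $X \geq X_{0}$ since $A \mapsto X$ is strictly decreasing.

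Next, I would analyse $\phi$ on $(0,\infty)$. Since $\phi'(X) = 4e^{\alpha r_{0}}X^{3} - 1$, the function $\phi$ is strictly decreasing on $[0,X^{*}]$ and strictly increasing on $[X^{*},\infty)$ where $X^{*} = (4e^{\alpha r_{0}})^{-1/3}$. Combined with $\phi(0) = -1 < 0$ and $\phi(X) \to +\infty$, this guarantees a unique root $X_{0} > 0$ of $\phi$, and moreover $\phi(X) \geq 0$ if and only if $X \geq X_{0}$. So the inequality in $(C1)$ holds precisely when $A \leq A_{0}$.

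To finish, I must verify that $X_{0}$ actually lies in the interval $[e^{-\alpha r_{0}/4}, e^{-2}]$, which is equivalent to $\phi(e^{-\alpha r_{0}/4}) \leq 0 \leq \phi(e^{-2})$. The first inequality is automatic:
$$
\phi\bigl(e^{-\alpha r_{0}/4}\bigr) = e^{\alpha r_{0}}\cdot e^{-\alpha r_{0}} - e^{-\alpha r_{0}/4} - 1 = -e^{-\alpha r_{0}/4} < 0,
$$
giving $X_{0} > e^{-\alpha r_{0}/4}$ and therefore $A_{0} < \pi r_{0}/\alpha$. The second inequality reads $\phi(e^{-2}) = e^{\alpha r_{0} - 8} - e^{-2} - 1 \geq 0$, i.e. $e^{\alpha r_{0} - 8} \geq 1 + e^{-2}$, which is exactly the hypothesis $\alpha \geq (8 + \log(1+e^{-2}))/r_{0}$; this gives $X_{0} \leq e^{-2}$, i.e. $A_{0} \geq 8\pi/\alpha^{2}$. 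For any $A \in [8\pi/\alpha^{2}, A_{0}]$ both parts of $(C1)$ are then satisfied, proving the lemma.

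The argument is essentially routine once the substitution $X = e^{-\alpha^{2}A/(4\pi)}$ is made; the only step requiring a moment of care is the convexity/monotonicity analysis of $\phi$ that guarantees uniqueness of the root $X_{0}$ and the ``above the root'' characterisation of $\{\phi \geq 0\}$, and then the sharp choice of the constant $8 + \log(1+e^{-2})$, which is dictated exactly by the endpoint evaluation $\phi(e^{-2}) \geq 0$.
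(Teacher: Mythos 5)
Your proposal is correct and takes essentially the same route as the paper: the same substitution $X=e^{-\alpha^{2}A/(4\pi)}$ turning $(C1)$ into $e^{\alpha r_0}X^4-X-1\geq 0$ on $(e^{-\alpha r_0/4},e^{-2}]$, the same monotonicity analysis of the quartic via its derivative, and the same endpoint evaluations $\phi(e^{-\alpha r_0/4})=-e^{-\alpha r_0/4}<0$ and $\phi(e^{-2})=e^{\alpha r_0-8}-e^{-2}-1\geq 0$ pinning down the constant $(8+\log(1+e^{-2}))/r_0$. The only cosmetic difference is that you characterise $\{\phi\geq 0\}=[X_0,+\infty)$ globally on $(0,+\infty)$ using $\phi(0)=-1$ and $\phi\to+\infty$, where the paper instead checks that the critical point $(4e^{\alpha r_0})^{-1/3}$ lies left of the interval so that $P$ is increasing there; your version is marginally cleaner, and your derivative $\phi'(X)=4e^{\alpha r_0}X^{3}-1$ also silently corrects a typo in the paper's displayed $P'$.
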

\begin{proof}
Let $X:=e^{-\frac{\alpha^2 A}{4\pi}}\leq 1$. We notice that  $\frac{8\pi}{\alpha^2}\leq A<\frac{\pi r_0}{\alpha}$ if and only if
\begin{equation}\label{condX}
e^{-\frac{\alpha r_0}{4}}<X\leq e^{-2}.
\end{equation}
We now define $P(X):=e^{\alpha r_0}X^4-X-1$ and we want to find $A$ such that \eqref{condX} is satisfied and $P(X)\geq 0$. Since $P'(X)=4e^{\alpha r_0}-1$, then $P$ is decreasing on $[0,4^{-1/3}e^{-\alpha r_0/3})$ and increasing on $(4^{-1/3}e^{-\alpha r_0/3},1]$. Since we assume $\frac{8\pi}{\alpha^2}<\frac{\pi r_0}{\alpha}$, therefore $\alpha>\frac{8}{r_0}$ and we get $\frac{e^{-\frac{\alpha r_0}{3}}}{4^{\frac{1}{3}}}\leq e^{-\frac{\alpha r_0}{4}}<e^{-2}$. It then follows that $P$ is then increasing on $\left[ e^{-\frac{\alpha r_0}{4}},e^{-2}\right]$. We compute $P\left(e^{-\frac{\alpha r_0}{4}} \right)=-e^{-\frac{\alpha r_0}{4}}<0$ and  $P(e^{-2})=e^{\alpha r_0-8}-e^{-2}-1$ which is positive if and only if $\alpha\geq \frac{8+\log(1+e^{-2})}{r_0}$ and the proof is then completed.
\end{proof}
\begin{remark}
It is actually possible to write an exact (complicated) formula for $A_0$ involving $r_0$ and $\alpha$, but we do not need such expression for our purpose here.
\end{remark}

\begin{lemma}[Sufficient conditions for which $(C2)$ is satisfied]\label{lem-C2}
If 
$$
\alpha>\frac{8+\log 2}{r_0} \quad \textnormal{and}\quad \frac{8\pi}{\alpha^2 \sqrt{e^{\alpha r_0-8}-1}}\leq A< \frac{8\pi}{\alpha^2}<\frac{\pi r_0}{\alpha},
$$
 then $(C2)$ is satisfied.
\end{lemma}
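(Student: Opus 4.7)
The plan is to use the same substitution $K := \alpha r_0$, $\beta := \alpha^2 A/\pi$ as in the proof of Theorem \ref{thm-main}. Under this change of variables the upper constraint $A < 8\pi/\alpha^2$ becomes $\beta < 8$, the right-hand side of $(C2)$ simplifies to $64/(e^2\beta^2)$, and the lower bound $A \geq 8\pi/(\alpha^2\sqrt{e^{\alpha r_0-8}-1})$ becomes $\beta^2(e^{K-8}-1) \geq 64$. The statement thus reduces to proving the single-variable inequality
$$ e^{K-\beta} - 1 \;\geq\; \frac{64}{e^2\beta^2} $$
for all $\beta \in [8/\sqrt{e^{K-8}-1},\, 8)$.

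Before tackling the inequality I would clear the elementary orderings. The hypothesis $\alpha > (8+\log 2)/r_0$ trivially gives $\alpha > 8/r_0$, so $8\pi/\alpha^2 < \pi r_0/\alpha$, which is the last ordering in the statement. The same hypothesis gives $e^{K-8} > 2$, equivalently $8/\sqrt{e^{K-8}-1} < 8$, so that the admissible interval for $\beta$ (and hence for $A$) is nonempty. In particular, the threshold $(8+\log 2)/r_0$ is exactly what makes this interval non-degenerate given the chosen form of the lower bound on $A$.

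For the main inequality I expect two elementary monotonicity estimates to suffice. On the left, $\beta < 8$ gives $e^{K-\beta} - 1 \geq e^{K-8} - 1$. On the right, the lower bound $\beta^2 \geq 64/(e^{K-8}-1)$ gives $64/(e^2\beta^2) \leq (e^{K-8}-1)/e^2$, and the latter is itself bounded above by $e^{K-8}-1$ since $1/e^2 < 1$ and $e^{K-8}-1 > 0$. Chaining the two estimates yields $(C2)$. I do not anticipate any hard step: the factor $1/e^2$ absorbed in the last inequality is precisely the slack that the lemma's hypothesis buys. A finer analysis, based on studying the critical points of $\beta \mapsto e^{K-\beta} - 1 - 64/(e^2\beta^2)$ directly, would produce a somewhat sharper threshold and a somewhat smaller lower bound on $A$, but such a refinement is not needed to reach the conclusion stated.
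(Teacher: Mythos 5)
Your proof is correct and is essentially the paper's argument in rescaled variables: the paper bounds $F(A) := A^2\left(e^{\alpha r_0}e^{-\alpha^2 A/\pi}-1\right)-\frac{64\pi^2}{e^2\alpha^4}$ from below using exactly your two estimates (the exponential bound from $A<8\pi/\alpha^2$ and the lower bound on $A$, with the same $1/e^2$ slack absorbed). Your explicit verification of the orderings and the nonemptiness of the interval is left implicit in the paper but changes nothing of substance.
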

\begin{proof}
Let $F(A):= A^2(e^{\alpha r_0}e^{-\frac{\alpha^2 A}{\pi}}-1)-\frac{64\pi^2}{e^2 \alpha^4}$, then, since $A<\frac{8\pi}{\alpha^2}$,
\begin{align*}
F(A)&\geq A^2\left(e^{\alpha r_0}e^{-\frac{\alpha^2}{\pi}\frac{8\pi}{\alpha^2}}-1  \right)-\frac{64\pi^2}{e^2 \alpha^4}=A^2\left(e^{\alpha r_0-8} -1 \right)-\frac{64\pi^2}{e^2 \alpha^4}
\end{align*}
which is nonnegative if $A\geq \frac{8\pi}{\alpha^2 \sqrt{e^{\alpha r_0-8}-1}}$.
\end{proof}

Therefore, combining Lemmas \ref{lem-C1} and \ref{lem-C2}, Theorem \ref{thm-triang} is proved.

\begin{remark}\label{rmk:globaltriimp}
For any fixed $\alpha$, as $r_0\to +\infty$, $A_0\to \frac{\pi r_0}{\alpha}<r_0^2$. Therefore, it is impossible to prove
 that $\Lambda_A$ is the unique minimizer of $E_{\alpha,r_0}$ in $\mathcal{L}_2^\circ(A)$ for any $A\in (0,r_0^2]$. We notice that it is straightforward (see e.g. \cite[Step 3 p. 3252]{BetTheta15}) to show that the global minimizer of $E_{\alpha,r_0}$ in $\mathcal{L}_2$ must have an area smaller than $r_0^2$. Therefore, it is not possible to conclude, whatever $r_0$ is, that the global minimizer of $E_{\alpha,r_0}$ in $\mathcal{L}_2$ is a triangular lattice.
\end{remark}

\subsection{Limits of our method}\label{sec-limit}

As we have already explained in \cite[Sec. 4.3]{BetTheta15}, our method based on Montgomery Theorem \cite[Thm 1]{Mont} is not optimal. Even though we were quite successful with it for Lennard-Jones type potentials and some difference of Yukawa potentials in \cite[Thm 1.2]{BetTheta15}, it turns out that, contrary to these examples:
\begin{enumerate}
\item We cannot prove the minimality of the triangular lattice at high density for $E_{\alpha,r_0}$ (i.e. for arbitrarily small $A$).
\item We cannot conclude that the global minimizer of $E_{\alpha,r_0}$ in $\mathcal{L}_2$ is a triangular lattice, even for a single value of $(\alpha,r_0)$.
\end{enumerate}
In this section, we show why the high density minimality cannot be reached with our method and we propose a modification of the Morse potential for getting this optimality by using this method.

\medskip

The next results shows that for all $\alpha>0$, we can find a large value $y_0\geq 1$ such that $g_A(y_0)$ is negative for small enough $A$. We recall that $g_A(y)=\frac{\alpha A}{\pi^2 y^{3/2}}u_A(y)$ (see \eqref{Eq:uA}).

\begin{lemma}[Negativity of $u_A$ for small $A$]
For any $\alpha>0$ and $r_0>0$, there exists $\lambda$ such that
$$
\lim_{A\to 0} u_A\left(  \frac{\lambda}{A}\right)=-\infty.
$$
\end{lemma}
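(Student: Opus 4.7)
The plan is to substitute $y = \lambda/A$ directly into the explicit formula \eqref{Eq:uA} for $u_A$ and to exploit the fact that with this scaling the exponents $\beta y$ and $\beta/y$ behave very differently as $A\to 0$. With $C = e^{\alpha r_0}$ and $\beta = \alpha^2 A/\pi$, the product $\beta y = \alpha^2\lambda/\pi$ becomes a constant independent of $A$, while $\beta/y = \alpha^2 A^2/(\pi\lambda)$ tends to $0$.

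Concretely, substituting into
$$u_A(y) = Cy^2 e^{-\beta y} - y^2 e^{-\beta y/4} + C e^{-\beta/y} - e^{-\beta/(4y)}$$
yields
$$u_A\!\left(\frac{\lambda}{A}\right) = \frac{\lambda^2}{A^2}\Bigl[C\, e^{-\alpha^2\lambda/\pi} - e^{-\alpha^2\lambda/(4\pi)}\Bigr] + C\, e^{-\alpha^2 A^2/(\pi\lambda)} - e^{-\alpha^2 A^2/(4\pi\lambda)}.$$
The last two terms stay bounded as $A\to 0$ (they converge to $C-1 = e^{\alpha r_0}-1$). So the asymptotics are governed entirely by the bracketed factor multiplying $\lambda^2/A^2$, and the statement reduces to finding $\lambda$ making that bracket strictly negative.

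The bracket is negative exactly when $C\, e^{-\alpha^2\lambda/\pi} < e^{-\alpha^2\lambda/(4\pi)}$, i.e.\ when $\alpha r_0 < 3\alpha^2\lambda/(4\pi)$, which is simply $\lambda > \tfrac{4\pi r_0}{3\alpha}$. Thus I would choose any such $\lambda$; then the coefficient of $\lambda^2/A^2$ is a strictly negative constant, hence the first block diverges to $-\infty$ while the remaining two terms stay bounded, and the limit follows.

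There is no real obstacle here; the only thing to watch is the verification that the chosen value $y_0 = \lambda/A$ is indeed $\geq 1$ for small $A$ (true since $\lambda>0$ is fixed and $A\to 0$), which ensures the substitution is in the regime where the formula \eqref{Eq:uA} for $g_A$ was derived. The heuristic meaning of the computation is that the threshold $\lambda > 4\pi r_0/(3\alpha)$ corresponds exactly to the region where the inverse Laplace transform $\mu_f$ is negative (cf.\ the condition $y \geq 3\alpha/(4r_0)$ appearing at the end of the proof of Theorem \ref{thm-main} after the change of variables $y \leftrightarrow \pi/(yA)$), which is consistent with the general non-optimality mechanism explained via \cite[Thm 1.5]{OptinonCM}.
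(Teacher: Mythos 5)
Your proposal is correct and follows essentially the same route as the paper: substituting $y=\lambda/A$ into \eqref{Eq:uA}, observing that the tail terms converge to $e^{\alpha r_0}-1$ while the coefficient of $\lambda^2/A^2$ is the constant $e^{\alpha r_0}e^{-\alpha^2\lambda/\pi}-e^{-\alpha^2\lambda/(4\pi)}$, which is strictly negative precisely when $\lambda>\tfrac{4\pi r_0}{3\alpha}$ (the paper's equivalent condition $\alpha>\tfrac{4\pi r_0}{3\lambda}$). Your added check that $\lambda/A\geq 1$ for small $A$ and the remark linking the threshold to the sign of $\mu_f$ are sound but not needed beyond what the paper does.
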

\begin{proof}
We easily compute that
$$
\lim_{A\to 0} u_A\left(  \frac{\lambda}{A}\right)=\lim_{A\to 0} \frac{\lambda^2}{A^2}\left( e^{\alpha r_0} e^{-\frac{\alpha^2 \lambda}{\pi}}-e^{-\frac{\alpha^2 \lambda}{4\pi}}\right) +e^{\alpha r_0}-1.
$$
Furthermore, we have that 
$$
e^{\alpha r_0} e^{-\frac{\alpha^2 \lambda}{\pi}}-e^{-\frac{\alpha^2 \lambda}{4\pi}}<0 \iff \alpha>\frac{4\pi r_0}{3\lambda}.
$$
Therefore, for any $\alpha$ and $r_0$, there exist $\lambda$ such that $\alpha r_0>4\pi/3\lambda$, thus the result is proved.
\end{proof}


It is also possible to modify the Morse potential by adding an inverse power law to its expression in order to get the optimality of the triangular lattice at high density. 

\begin{prop}[Modification of $f$ and optimality of $\Lambda_A$ for small $A$] Let $p>5/2$ and $\beta>0$, we then define the following modification of the Morse potential:
$$
\tilde{f}(r^2)=f(r^2)+\frac{\beta}{r^{2p}}=e^{\alpha} e^{-2\alpha r}-2e^{-\alpha r}+\frac{\beta}{r^{2p}}. 
$$
For any $\beta>0$, $\alpha>0$ and $p>5/2$, if $0<A\leq \min\left\{\frac{\pi}{\alpha}, \left( \frac{\beta \pi^{p+1}}{\alpha \Gamma(p)} \right)^{\frac{1}{p}}\right\}$, then $\Lambda_A$ is the unique minimizer in $\mathcal{L}_2^\circ(A)$, up to rotation, of 
$$
L\mapsto E_{\tilde{f}}[L]=\sum_{p\in L\backslash \{0\}} \tilde{f}(|p|^2).
$$
\end{prop}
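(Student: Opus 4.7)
The plan is to follow the same scheme as Theorem \ref{thm-main}: combine the inverse Laplace representation of $\tilde f$ with the integral formula of \cite[Thm 1.1]{BetTheta15} and reduce the minimization to the pointwise minimization of the lattice theta function via Montgomery's theorem.

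First, I would compute the inverse Laplace transform of $\tilde f$. Since the inverse Laplace transform of $s\mapsto s^{-p}$ is $y\mapsto y^{p-1}/\Gamma(p)$, linearity yields
$$
\mu_{\tilde f}(y)=\mu_f(y)+\frac{\beta\, y^{p-1}}{\Gamma(p)},
$$
where $\mu_f$ is given by \eqref{eq:laplacetransformMorse} (with $r_0=1$). Plugging this into the integral representation shows that the kernel splits as $g_A=g_A^{M}+g_A^{P}$, where $g_A^{M}(y)=\tfrac{\alpha A}{\pi^2 y^{3/2}}u_A(y)$ is the Morse kernel of \eqref{Eq:uA} and
$$
g_A^P(y)=\frac{\beta\pi^{p-1}}{\Gamma(p)A^{p-1}}\bigl(y^{-p}+y^{p-1}\bigr)\ge \frac{\beta\pi^{p-1}}{\Gamma(p)A^{p-1}}\,y^{p-1}\quad\text{for all }y\ge 1.
$$
Since $\theta_L(y/A)-1\ge 0$ and $\theta_L(\cdot)$ is minimized in $\mathcal{L}_2^\circ(A)$ at $\Lambda_A$ by Montgomery's theorem, it suffices to establish $g_A(y)\ge 0$ on $[1,+\infty)$.

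Second, I would bound $g_A^M$ from below using the factorization
$$
u_A(y)=y^2 e^{-\gamma y}\bigl(e^{\alpha-3\gamma y}-1\bigr)+e^{-\gamma/y}\bigl(e^{\alpha-3\gamma/y}-1\bigr),\qquad \gamma:=\frac{\alpha^2 A}{4\pi}.
$$
The assumption $A\le \pi/\alpha$ gives $3\gamma\le 3\alpha/4<\alpha$, so for every $y\ge 1$ one has $3\gamma/y\le 3\alpha/4<\alpha$ and the second summand is non-negative. When the first summand is negative (i.e.\ $y>4\pi/(3\alpha A)$) one has $|e^{\alpha-3\gamma y}-1|\le 1$, so $u_A(y)\ge -y^2 e^{-\gamma y}\ge -y^2$ on $[1,+\infty)$. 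Consequently
$$
g_A^M(y)\ge -\frac{\alpha A\, y^{1/2}}{\pi^2},\qquad y\ge 1.
$$

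Third, combining the two bounds, for every $y\ge 1$,
$$
g_A(y)\ge \frac{y^{1/2}}{\pi^2}\left(\frac{\beta\pi^{p+1}}{\Gamma(p)A^{p-1}}\,y^{p-3/2}-\alpha A\right).
$$
Since $p>5/2>3/2$, one has $y^{p-3/2}\ge 1$ on $[1,+\infty)$, so $g_A(y)\ge 0$ as soon as $\alpha A^p\Gamma(p)\le \beta\pi^{p+1}$, which is precisely the second condition $A\le (\beta\pi^{p+1}/(\alpha\Gamma(p)))^{1/p}$. Montgomery's theorem and the strict positivity of $g_A^P$ then yield the uniqueness (up to rotation) of the minimizer $\Lambda_A$. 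The only genuine obstacle is the crude lower bound $u_A\ge -y^2$, which is sharp only where $\gamma y=O(1)$; it is enough here precisely because the polynomial regularization $\beta/r^{2p}$ injects a kernel growing like $y^{p-1}$ with $p-1>3/2$, comfortably dominating the $y^{1/2}$ residual of the Morse part and so repairing the high-density obstruction identified in the preceding lemma.
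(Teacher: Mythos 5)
Your proposal is correct and follows essentially the same route as the paper: the inverse Laplace transform splits as $\mu_{\tilde f}=\mu_f+\beta y^{p-1}/\Gamma(p)$, the Morse kernel is bounded below via $u_A(y)\geq -y^2$ (your factorization with $\gamma=\alpha^2A/(4\pi)$ is just a repackaging of the paper's estimate \eqref{Eq:LBuA}, where the condition $A\leq\pi/\alpha$ guarantees the discarded terms are nonnegative), and the polynomial kernel $\sim y^{p-1}$ dominates the residual $y^{1/2}$ exactly when $\alpha\Gamma(p)A^p\leq\beta\pi^{p+1}$. The only cosmetic difference is that the paper phrases the final domination step as Cauchy's upper bound for the largest root of the polynomial $P_A$, requiring that root to be at most $1$, whereas you compare the two monomials directly via $y^{p-3/2}\geq 1$ on $[1,+\infty)$ --- these are the same inequality.
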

\begin{remark}
The potential $r\mapsto \tilde{f}(r^2)$ can be also seen as a small (exponential) perturbation of the opposite of the Buckingham type potential $V_B(r)=2e^{-\alpha r} -\beta r^{-2p}$ (see \cite[Sec. 7.2]{BetTheta15}) originally proposed by Buckingham for modelling interactions in rare-gases (see \cite[p. 276]{Buck}).
\end{remark}
\begin{proof}
It is again a straightforward application of \cite[Thm 1.1]{BetTheta15}, using the estimate we found in the proof of Theorem \ref{thm-triang}. More precisely, we have, for any Bravais lattices $L$ of area $A$ (note again the tiny difference of definition for $\theta_L$ with \cite{BetTheta15}),
$$
E_{\tilde{f}}[L]=\frac{\pi}{A}\int_1^{+\infty}\left(\theta_L\left( \frac{y}{A} \right)-1 \right) \tilde{g}_A(y)dy + C_A, \quad \tilde{g}_A(y):=y^{-1}\mu_{\tilde{f}}\left( \frac{\pi}{yA}  \right)+\mu_{\tilde{f}}\left( \frac{\pi y}{A} \right).
$$
It is easy to compute $\tilde{g}_A(y)=\frac{\alpha A}{\pi^2 y^{p-1/2}}\left(u_A(y) y^{p-5/2} +\frac{\beta \pi^{p+1}}{\alpha \Gamma(p)}A^{-p}+\frac{\beta \pi^{p+1}}{\alpha \Gamma(p)}A^{-p}y^{2p-2}  \right)$ and to show that, assuming that $A<\frac{\pi}{\alpha}$,
\begin{align*}
\tilde{g}_A(y) &\geq \frac{\alpha A}{\pi^2 y^{p-1/2}}\left( -y^{p-1/2}+\left(e^{\alpha}e^{-\frac{\alpha^2 A}{\pi}}-1 \right)y^{p-5/2}+   \frac{\beta \pi^{p+1}}{\alpha \Gamma(p)}A^{-p}y^{2p-2} + \frac{\beta \pi^{p+1}}{\alpha \Gamma(p)}A^{-p}\right)\\
&=:\frac{\alpha A}{\pi^2 y^{p-1/2}}P_A(y).
\end{align*}
Using Cauchy's upper bound for the largest root of $P_A$ (see \cite[Sec. 2.4]{BetTheta15}), we find that $y\geq \left( \frac{\Gamma(p) \alpha A^p}{\beta \pi^{p+1}} \right)^{\frac{1}{p-3/2}}\Rightarrow P_A(y)\geq 0\Rightarrow \tilde{g}_A(y)\geq 0$. Since we have $\left( \frac{\Gamma(p) \alpha A^p}{\beta \pi^{p+1}} \right)^{\frac{1}{p-3/2}}\leq 1$ by assumption, $\tilde{g}_A(y)\geq 0$ if $y\geq 1$ and the proof is completed.
\end{proof}

\begin{example}
 By choosing $p$ very large and $\beta$ very small, we get a reasonable approximation of $V_M$ (close and after its minimum) for which we can prove the optimality of the triangular lattice at high density. As an example, we have plotted $V_M$ for $r_0=1$ and $\alpha=6$ as well as $r\mapsto V_M(r)+r^{-12}=e^6\tilde{f}(r)$ with $p=100$ and $\beta=(10 e)^{-6}$ in Figure \ref{fig:modifmorse}. Applying the previous proposition, we get the optimality of $\Lambda_A$ for $E_{\tilde{f}}$ in $\mathcal{L}_2^\circ(A)$ for any $0<A<0.07056$ and Theorem \ref{thm-main} gives the optimality of the triangular lattice for $E_{6,1}$ when $0.0139\leq A\leq 0.5034$.
 \end{example}

\begin{figure}[!h]
\centering
\includegraphics[width=7cm]{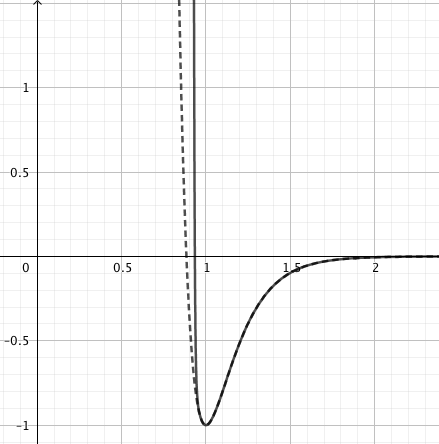} 
\caption{Plot of $V_M$ for $\alpha=6,r_0=1$ (dashed line) and $r\mapsto e^6\tilde{f}(r^2)$ with $\beta=(10e)^{-6}$, $p=100$.}
\label{fig:modifmorse}
\end{figure}

\begin{remark} Our method seems to be effective for proving the optimality of $\Lambda_A$ when $A$ is arbitrarily small when the interaction potential diverges and is equivalent to a completely monotone function (like an inverse power law) at $0$, as we have already remarked in \cite{Betermin:2014fy,BetTheta15}. We also notice that $\tilde{f}$ is not a function with only one well, hence it is again impossible to apply our method developed in \cite{Betermin:2014fy,BetTheta15} to show that the global minimizer of $E_{\tilde{f}}$ is a triangular lattice (see Remark \ref{rmk:globaltriimp}).
\end{remark}

\subsection{Local minimality of $\Lambda_A$ for small $A$}

Using lattice symmetries, it is straightforward to show that, for any $\alpha,r_0,A\in (0,+\infty)$, $\Lambda_A$ and $\sqrt{A}\Z^2$ are critical points of $E_{\alpha,r_0}$ in $\mathcal{L}_2^\circ(A)$ (see e.g. \cite[Prop 3.2 and 3.4]{Beterloc}). We also recall the following result showing that the Hessian of our energy at $\Lambda_A$ has a very simple diagonal form.

\begin{lemma}[\cite{Beterloc}]\label{derivtriangular}
The Hessian of $E_{\alpha,r_0}$ at $\Lambda_A$ is $D^2 E_{\alpha,r_0}[\Lambda_A]=T_{\alpha,r_0}(A) I_2$ where
\begin{align*}
T_{\alpha,r_0}(A):=\frac{4A}{\sqrt{3}}\sum_{m,n}n^2 f'\left(\frac{2A}{\sqrt{3}}[m^2+mn+n^2]  \right)+\frac{4A^2}{3}\sum_{m,n}n^4f''\left(\frac{2A}{\sqrt{3}}[m^2+mn+n^2]  \right).
\end{align*}
\end{lemma}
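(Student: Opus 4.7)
The plan is to compute the Hessian in a convenient two-parameter chart on $\mathcal{L}_2^\circ(A)$ and then exploit the symmetry of the triangular lattice to collapse the generic symmetric $2\times 2$ matrix to a scalar multiple of $I_2$. First, I would parameterize a neighborhood of $\Lambda_A$ in $\mathcal{L}_2^\circ(A)$ by $(x,y)$ with $y>0$, setting
$$L(x,y) := \sqrt{A/y}\,\bigl[\Z(1,0) \oplus \Z(x,y)\bigr],$$
so that $\Lambda_A = L(1/2, \sqrt{3}/2)$. The associated quadratic form is $Q_{x,y,A}(m,n) = (A/y)\bigl[(m+nx)^2 + n^2 y^2\bigr]$, and
$$E_{\alpha,r_0}[L(x,y)] = \sum_{(m,n)\in\Z^2\setminus\{(0,0)\}} f\!\bigl(Q_{x,y,A}(m,n)\bigr).$$
That $(1/2,\sqrt{3}/2)$ is a critical point is standard and recalled in \cite[Prop. 3.2, 3.4]{Beterloc}. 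At $(x_0,y_0)=(1/2,\sqrt{3}/2)$ one has $A/y_0 = 2A/\sqrt{3}$ and $(m+n/2)^2+3n^2/4 = m^2+mn+n^2$, so $Q_{\Lambda_A}(m,n) = \tfrac{2A}{\sqrt{3}}(m^2+mn+n^2)$.

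Next, to see the diagonal form $D^2 E_{\alpha,r_0}[\Lambda_A] = T_{\alpha,r_0}(A)\,I_2$, I would invoke the $6$-fold rotational symmetry of $\Lambda_A$. This symmetry fixes the energy functional and acts non-trivially on the two-dimensional tangent space $T_{\Lambda_A}\mathcal{L}_2^\circ(A)$, inducing an action by conjugation of an order-$3$ rotation on the symmetric Hessian matrix. The only symmetric real $2\times 2$ matrices invariant under conjugation by a rotation of order strictly greater than $2$ are scalar multiples of the identity, which yields the announced form.

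Finally, to identify the scalar $T_{\alpha,r_0}(A)$, I would compute a single diagonal entry, say $\partial_y^2 E_{\alpha,r_0}[L(x,y)]$ at $(1/2,\sqrt{3}/2)$. Applying the chain rule to $f\bigl(Q_{x,y,A}(m,n)\bigr)$ twice, each term splits into a first-derivative contribution weighted by $\partial_y^2 Q$ plus a second-derivative contribution weighted by $(\partial_y Q)^2$. After evaluating these partial derivatives of $Q$ at $(1/2,\sqrt{3}/2)$---where the $y$-dependence enters both through the prefactor $A/y$ and through the $n^2 y^2$ term---and using the $\mathrm{GL}_2(\Z)$-invariance of the lattice sum to symmetrize the $n$-moments that appear, the combinatorial prefactors collapse to $\tfrac{4A}{\sqrt{3}}$ and $\tfrac{4A^2}{3}$, reproducing exactly
$$T_{\alpha,r_0}(A) = \frac{4A}{\sqrt{3}}\sum_{m,n} n^2 f'\!\bigl(Q_{\Lambda_A}(m,n)\bigr) + \frac{4A^2}{3}\sum_{m,n} n^4 f''\!\bigl(Q_{\Lambda_A}(m,n)\bigr).$$

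The main obstacle is purely bookkeeping: tracking the chain-rule constants produced by the area-fixing reparameterization $\sqrt{A/y}$ together with the $y$-dependence inside $Q_{x,y,A}$, and checking that the asymptotic decay of $f'$ and $f''$ at infinity---guaranteed by the exponential form of $f$---validates term-by-term differentiation of the lattice sum. The symmetry argument lets one bypass any explicit verification that the off-diagonal entry vanishes and that $\partial_x^2$ gives the same value as $\partial_y^2$, which would otherwise be the most tedious part of the computation.
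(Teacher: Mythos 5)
Your proposal is sound, but note first that the paper contains no proof of this lemma at all: it is recalled verbatim from \cite{Beterloc}, where it is obtained by computing all four second partial derivatives of $(x,y)\mapsto E_f[L(x,y)]$ in exactly the half-plane chart you chose (your quadratic form $Q_{x,y,A}(m,n)=\frac{A}{y}\left[(m+nx)^2+n^2y^2\right]$ is the parametrization used there, which matters since ``Hessian equals $T\,I_2$'' is a chart-dependent assertion) and by checking through explicit $(m,n)$-substitutions that the off-diagonal entry vanishes and the diagonal entries agree. Your route is a legitimate and genuinely slicker alternative: since $\tau_0=\tfrac12+i\tfrac{\sqrt3}{2}$ is fixed by an order-$3$ elliptic element of $\mathrm{PSL}_2(\Z)$ whose holomorphic multiplier at the fixed point is a primitive cube root of unity of modulus one (the map is a hyperbolic isometry fixing the point), its differential is an honest Euclidean rotation of order $3$ in the $(x,y)$ chart; invariance of $E$ plus criticality of $\Lambda_A$ give $R^{T}HR=H$, and a rotation with no real invariant line forces $H=cI_2$. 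This buys you exactly what you claim: no off-diagonal computation and only one diagonal entry to evaluate. I verified your bookkeeping plan: with $\ell_A^2=\frac{2A}{\sqrt3}$ and writing lattice points as $p=\ell_A(m+\tfrac n2,\tfrac{\sqrt3}{2}n)$, the hexagonal point-symmetry identities $\sum_p p_1^2\,g(|p|^2)=\sum_p p_2^2\,g(|p|^2)$ and $\sum_p p_1^4\,g=\sum_p p_2^4\,g=3\sum_p p_1^2p_2^2\,g$ (valid for $g\in\{f',f''\}$ by six-fold invariance and the absolute convergence your decay remark secures) are precisely the ``symmetrization of $n$-moments'' you invoke, and they do collapse $\partial_y^2E$ to $\frac{4A}{\sqrt3}\sum_{m,n}n^2f'+\frac{4A^2}{3}\sum_{m,n}n^4f''$. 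One practical remark: $\partial_x^2$ is the cheaper entry to compute, since $\partial_x Q=\frac{2A}{y}n(m+nx)$ and $\partial_x^2Q=\frac{2A}{y}n^2$ give directly $\partial_x^2E=\frac{4A}{\sqrt3}\sum n^2f'+\frac{16A^2}{3}\sum n^2(m+\tfrac n2)^2f''$, and the single identity $\sum n^2(m+\tfrac n2)^2f''=\frac14\sum n^4f''$ finishes; your choice of $\partial_y^2$ works too but involves the extra $(m+\tfrac n2)^4$ moment through $(\partial_yQ)^2$.
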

Therefore, substituting $f$ by its expression, writing $Q(m,n):=m^2+mn+n^2$ and $\ell_A:=\sqrt{\frac{2A}{\sqrt{3}}}$, we obtain
\begin{align*}
T_{\alpha,r_0}(A)=&A\alpha^2 \sum_{m,n} \frac{n^4 e^{-\alpha \ell_A\sqrt{Q(m,n)}}}{Q(m,n)}\left\{\frac{2}{\sqrt{3}}e^{\alpha r_0}e^{-\alpha \ell_A \sqrt{Q(m,n)}}-1  \right\}\\
&\quad +\frac{\sqrt{A}\alpha}{\sqrt{2}3^{1/4}}\sum_{m,n} \frac{n^2 e^{-\alpha \ell_A\sqrt{Q(m,n)}}}{\sqrt{Q(m,n)}}\left(  4-\frac{n^2}{Q(m,n)}\right)\left\{1-e^{\alpha r_0} e^{-\alpha \ell_A \sqrt{Q(m,n)}}  \right\}
\end{align*}

Thus, it is possible to show that $\Lambda_A$ is a local minimizer of $E_{\alpha,r_0}$ in $\mathcal{L}_2^\circ(A)$ for sufficiently small values of $A$. This is a good indication about the optimality of the triangular lattice at high density, which is impossible to get by using our method from \cite{Betermin:2014fy,BetTheta15} as recalled in the previous section.

\begin{lemma}[Local optimality of the triangular lattice at high density]\label{lem:asympttriangular}
There exists $A_0$ such that for any $0<A<A_0$, $\Lambda_A$ is a local minimizer of $E_{\alpha,r_0}$ in $\mathcal{L}_2^\circ(A)$.
\end{lemma}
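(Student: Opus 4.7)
The plan is to show $T_{\alpha,r_0}(A)>0$ for all sufficiently small $A$; by Lemma~\ref{derivtriangular} and the criticality of $\Lambda_A$ this implies the claimed local minimality. First I would split $f(r)=e^{\alpha r_0}g_{2\alpha}(r)-2g_\alpha(r)$ with $g_\beta(r):=e^{-\beta\sqrt{r}}$, so that $E_{\alpha,r_0}[L]-f(0)=e^{\alpha r_0}F_{2\alpha}[L]-2F_\alpha[L]$ where $F_\beta[L]:=\sum_{p\in L\setminus\{0\}}e^{-\beta|p|}$. Applying the same six-fold-symmetry argument that gives Lemma~\ref{derivtriangular} to each $F_\beta$ yields $D^2 F_\beta[\Lambda_A]=T_\beta(A)\,I_2$, and thus $T_{\alpha,r_0}(A)=e^{\alpha r_0}T_{2\alpha}(A)-2T_\alpha(A)$.

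To compute the small-$A$ asymptotics of $T_\beta(A)$, I would combine scaling with Poisson summation. Writing $\Lambda_A=\ell_A L_1$ with $L_1:=\Lambda_{\sqrt{3}/2}\in\mathcal{L}_2^\circ(\sqrt{3}/2)$ the triangular lattice of minimal distance $1$, the substitution $p=\ell_A\tilde p$ gives $F_\beta[\ell_A\tilde L]=F_{\beta\ell_A}[\tilde L]$. The two-dimensional Fourier transform of $x\mapsto e^{-\beta'|x|}$ is $q\mapsto 2\pi\beta'(\beta'^{2}+4\pi^{2}|q|^{2})^{-3/2}$, so Poisson summation yields
\begin{equation*}
F_{\beta'}[\tilde L]=-1+\frac{2}{\sqrt{3}}\sum_{q\in\tilde L^{*}}\frac{2\pi\beta'}{(\beta'^{2}+4\pi^{2}|q|^{2})^{3/2}}.
\end{equation*}
Isolating the $\tilde L$-independent $q=0$ term and expanding each remaining term for small $\beta'=\beta\ell_A$ gives
\begin{equation*}
F_{\beta'}[\tilde L]=C(\beta')+\frac{\beta'}{2\sqrt{3}\,\pi^{2}}\,\zeta_{\tilde L^{*}}(3/2)+O((\beta')^{3}),
\end{equation*}
where $\zeta_M(3/2):=\sum_{q\in M\setminus\{0\}}|q|^{-3}$ denotes the Epstein zeta; the remainder and its two $\tilde L$-derivatives are uniformly $O((\beta')^{3})$ near $L_1$ by absolute convergence of the $|q|^{-5}$ tail.

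Taking two derivatives in $\tilde L$ at $L_1$ and using the chain-rule identity $D^{2}_{\tilde L}F_\beta[\ell_A\tilde L]\big|_{L_1}=\ell_{A}^{2}\,T_\beta(A)\,I_2$ then produces $T_\beta(A)=\frac{\beta c_{3}}{2\sqrt{3}\,\pi^{2}\,\ell_A}+O(\ell_A)$, where $c_3>0$ is defined by $D^{2}_{\tilde L}\,\zeta_{\tilde L^{*}}(3/2)\big|_{\tilde L=L_1}=c_3\,I_2$. Combining the two energy pieces,
\begin{equation*}
T_{\alpha,r_0}(A)=\frac{\alpha\,c_3\,(e^{\alpha r_0}-1)}{\sqrt{3}\,\pi^{2}\,\ell_A}+O(\ell_A),
\end{equation*}
which is strictly positive for all $A$ sufficiently small, giving the claimed local minimality. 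The positivity of $c_3$ comes from the Rankin--Cassels--Diananda--Ennola theorem (the triangular lattice $L_{1}^{*}$ is the unique minimizer of $\zeta_L(3/2)$ on $\mathcal{L}_{2}^{\circ}(2/\sqrt{3})$) together with the six-fold symmetry of $L_1$, which forces the Hessian in question to be a positive scalar multiple of $I_2$. The main technical obstacle I anticipate is the uniform $C^{2}$-control of the Poisson remainder as $\tilde L$ varies near $L_1$, which reduces to a standard absolute-convergence estimate for $\sum_{q\in\tilde L^{*}\setminus\{0\}}|q|^{-5}$ and its $\tilde L$-derivatives on a compact neighborhood of $L_1$.
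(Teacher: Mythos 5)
Your proposal is correct in outline but takes a genuinely different route from the paper. The paper stays entirely in real space: it takes the explicit Hessian coefficient of Lemma~\ref{derivtriangular}, factors it as $T_{\alpha,r_0}(A)=\sqrt{A}\,\alpha\left(E_1+\sqrt{A}\,\alpha E_2\right)$, and argues that for small $A$ the sign is governed by the sum $E_1$, whose terms carry the factor $\frac{2}{\sqrt{3}}e^{\alpha r_0}e^{-\alpha \ell_A\sqrt{Q(m,n)}}-1$, positive for the dominant (small $Q(m,n)$) terms, the tail decaying exponentially. You instead split the energy into the two completely monotone pieces $F_{2\alpha}$ and $F_\alpha$, pass to the Fourier side by Poisson summation, use that the $q=0$ term is shape-independent at fixed covolume, and identify the first shape-dependent contribution as a multiple of $\beta'\sum_{q\in \tilde L^*\setminus\{0\}}|q|^{-3}$, i.e.\ an Epstein zeta function (note your $\zeta_{M}(3/2)$ is the paper's $\zeta_{M}(3)$ in its convention), reducing everything to positive-definiteness of the zeta Hessian at the triangular point. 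Your computations check out: the two-dimensional Fourier transform, the covolume factor $2/\sqrt{3}$, and the uniform $C^2$ control of the remainder via the $|q|^{-5}$ tail are all fine; the factor $\ell_A^2$ in your chain-rule identity depends on whether one differentiates in the scale-invariant shape coordinates the paper uses (then it is absent) or in the basis vectors, but it is a positive scalar and only shifts the power of $\ell_A$ in your final display, not the sign. What your route buys is substantial: an explicit asymptotic rate for $T_{\alpha,r_0}(A)$, the transparent mechanism that both pieces contribute at the same order $\beta'$ with net factor $e^{\alpha r_0}-1>0$ (so the conclusion manifestly holds for every $\alpha,r_0>0$), and robustness where the paper's dominant-term reasoning is delicate --- as $A\to 0$ the decay rate $\alpha\ell_A\sqrt{Q}$ degenerates, so infinitely many terms of $E_1$, including ones with negative bracket, contribute at leading order, and the real cancellation of the shape-blind continuum parts is exactly what your Poisson-summation step makes visible.

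There is, however, one genuine gap: the strict positivity of $c_3$. Unique global minimality of the Epstein zeta function at the triangular lattice (Rankin--Cassels--Diananda--Ennola), combined with six-fold symmetry, gives only $c_3\geq 0$: a strict minimum of a smooth function can have vanishing Hessian, and if $c_3=0$ your expansion yields $T_{\alpha,r_0}(A)=O(\ell_A)$ with undetermined sign, so the argument does not close as written. You need a nondegeneracy input, e.g.\ Coulangeon's spherical-design criterion (all layers of the triangular lattice are spherical $4$-designs, which yields \emph{strict} local minimality of $\zeta_L(s)$ for all admissible $s$ with positive second variation), or a direct evaluation of the explicit lattice sum defining $c_3$, or the integral representation $\zeta_L(3)=\frac{\pi^{3/2}}{\Gamma(3/2)}\int_0^{+\infty}\left(\theta_L(t)-1\right)t^{1/2}\,dt$ together with nondegeneracy of the theta-function Hessian at the triangular lattice for some $t>0$. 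With any of these added, your proof is complete, and in fact more quantitative than the paper's own two-line argument.
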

\begin{proof}
We first recall that $\Lambda_A$ is a critical point in $\mathcal{L}_2^\circ(A)$. Furthermore, we can write $T_{\alpha,r_0}(A)=\sqrt{A}\alpha\left(E_1(A,\alpha)+ \sqrt{A}\alpha E_2(\alpha,A)  \right)$, thus the sign of $T_{\alpha,r_0}(A)$ as $A\to 0$ is given by 
$$
E_1(\alpha,A):=\sum_{m,n} \frac{n^4 e^{-\alpha \ell_A\sqrt{Q(m,n)}}}{Q(m,n)}\left\{\frac{2}{\sqrt{3}}e^{\alpha r_0}e^{-\alpha \ell_A \sqrt{Q(m,n)}}-1  \right\}>0
$$
for $A$ small enough, because $\frac{2}{\sqrt{3}}e^{\alpha r_0}-1>0 $ and $\frac{n^4 e^{-\alpha \ell_A\sqrt{Q(m,n)}}}{Q(m,n)}$ is decreasing exponentially fast as $m^2+mn+n^2$ increases.
\end{proof}

It turns out that a systematic analysis of the sign of $T_{\alpha,r_0}(A)$, as well as all the Hessians in dimensions $2$ and $3$, with respect to the area $A$ (or the volume) is a difficult task. Therefore, we have performed in Section \ref{sec-numeric2d} and \ref{sec-numeric3d} many numerical investigations showing the nature of the main Bravais lattices we are interested in.

\section{A general result about volume-stationary lattices - Proof of Theorems \ref{prop:stat2d} and \ref{prop:stat3d}}\label{sec:stat}

In this section, we show a general result about `volume-stationary lattices', i.e. lattices being critical points of $E_f$ defined by
$$
E_f[L]:=\sum_{p\in L\backslash \{0\}} f(|p|^2)
$$
on $\mathcal{L}_d^\circ(V)$ in an open interval of volumes $V$, for a large class of interaction potential $f:(0,+\infty)\to \R$ integrable at infinity and being the Laplace transform of a Borel measure on $(0,+\infty)$. It is important to notice that all the classical interaction potentials used in molecular simulations belong to this class of functions. Our result is based on Gruber's results \cite[Cor 1 and Cor 2]{Gruber} where all the lattices being critical points for the Epstein zeta function, defined by
$$
\zeta_L(s):=\sum_{p\in L \backslash \{0\}} \frac{1}{|p|^s}
$$
and for all $s>d$, are characterized. It turns out that they all have their layers strongly eutactic in the sense of the following definition.

\begin{defi}[Strongly eutactic layer]\label{def:eutactic}
Let $L\in \mathcal{L}_d^\circ(1)$. We say that a layer $M=\{p\in L ; |p|=\lambda\}$, for some $\lambda>0$, of $L$ is strongly eutactic if $\sharp M=2k$ and, for any $x\in \R^d$, 
$$
\sum_{p\in M} \frac{(p\cdot x)^2}{|p|^2}=\frac{2k}{d}|x|^2.
$$
\end{defi}
\begin{remark}
After a suitable renormalization, $M$ is also called a spherical $2$-design (see \cite{BachocVenkov}).
\end{remark}

We then show the following result describing the only Bravais lattices in dimensions $2$ and $3$ that can stay stationary for $E_f$ under any small perturbation of the density. This result confirms our numerical observations performed in this paper as well as in \cite{Beterloc,Beterminlocal3d} for the classical Lennard-Jones potential.

\begin{thm}[Volume-stationary lattices for $E_f$]\label{thm:eutactic}
Let $d\geq 2$ and $f:(0,+\infty)\to \R$ be a nonzero function such that
\begin{enumerate}
\item as $r\to +\infty$, we have $f(r)=O(r^{-d/2-\eta})$ for some $\eta>0$,
\item for any $r>0$, it holds $\displaystyle f(r)=\int_0^{+\infty} e^{-rt} d\mu_f(t)dt$ for some Borel measure $\mu_f$ on $(0,+\infty)$.
\end{enumerate}
Let $L_0\in \mathcal{L}_d^\circ(1)$. There exists an open interval $I$ such that $\nabla_{L_0} E_f[V^{1/d} L_0]=0$ for all $V\in I$ if and only if all the layers of $L_0$ are strongly eutactic, and it follows that $I=(0,+\infty)$.

\medskip

In particular, $L_0\in \{\Z^2,\Lambda_1\}$ in dimension $2$ and $L_0\in \{\Z^3,\mathsf{D}_3,\mathsf{D}_3^*\}$ in dimension $3$.
\end{thm}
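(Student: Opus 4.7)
The plan is to translate the $E_f$-gradient condition into a critical-point condition for the Epstein zeta function $\zeta_{L_0}$, so that Gruber's Corollaries 1 and 2 in \cite{Gruber} apply directly. First I would compute the first-order variation of $E_f[V^{1/d}L_0]$ under an infinitesimal volume-preserving shape deformation $L_0\mapsto(I_d+\epsilon A)L_0$ with $A$ traceless and symmetric:
\[
\partial_\epsilon\big|_{\epsilon=0} E_f\!\left[V^{1/d}(I_d+\epsilon A)L_0\right] = 2V^{2/d}\sum_{p\in L_0\setminus\{0\}} f'(V^{2/d}|p|^2)\, p^{T}Ap.
\]
Grouping by layers $M_\lambda:=\{p\in L_0:|p|^2=\lambda\}$ and decomposing $T^\lambda:=\sum_{p\in M_\lambda} p\,p^{T} = \frac{\lambda\,\sharp M_\lambda}{d}I_d + \tilde{T}^\lambda$ into isotropic and traceless symmetric parts, the vanishing of the gradient at $V^{1/d}L_0$ for every traceless $A$ is equivalent to the matrix identity
\[
\sum_\lambda f'(V^{2/d}\lambda)\,\tilde{T}^\lambda = 0_{d\times d}\qquad\forall\, V\in I,
\]
and the easy direction follows at once: strong eutaxy of every layer is exactly $\tilde T^\lambda=0$ for all $\lambda$, so the identity holds on $I=(0,+\infty)$.

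For the converse I would proceed via analyticity followed by a Mellin transform in $u:=V^{2/d}$. Because $f$ is the Laplace transform of a Borel measure, $f$ and $f'$ are real-analytic on $(0,+\infty)$, and the decay assumption (1) makes $u\mapsto\sum_\lambda f'(u\lambda)\tilde{T}^\lambda$ a locally uniformly convergent, hence real-analytic, function of $u$ on $(0,+\infty)$. Its vanishing on the non-empty open set $\{V^{2/d}:V\in I\}$ therefore propagates to all $u>0$. Integrating the resulting identity against $u^{s-1}\,du$ and substituting $r=u\lambda$ gives, for $s$ in a suitable strip,
\[
0 = \int_0^{+\infty} u^{s-1}\sum_\lambda f'(u\lambda)\,\tilde{T}^\lambda\,du = \Phi(s)\,\sum_\lambda \lambda^{-s}\tilde{T}^\lambda,
\]
where $\Phi(s):=\int_0^{+\infty} r^{s-1} f'(r)\,dr = -\Gamma(s)\int_0^{+\infty} t^{1-s}\,d\mu_f(t)$, the second formula following from integration by parts together with Fubini applied to the Laplace representation of $f$. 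Since $f\not\equiv 0$ forces $\mu_f\neq 0$ by injectivity of the Laplace transform, the integral $\int_0^{+\infty} t^{1-s}d\mu_f(t)$ cannot vanish identically on a strip, so $\Phi\not\equiv 0$. Consequently $\sum_\lambda\lambda^{-s}\tilde{T}^\lambda=0$ on an open set of $s$, hence on its whole region of absolute convergence by the identity theorem applied to this Dirichlet series.

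Finally, I would recognise $\sum_\lambda\lambda^{-s}\tilde T^\lambda$ as the traceless part of $\sum_{p\in L_0\setminus\{0\}}p\,p^{T}|p|^{-2s}$, so that its vanishing on a half-plane of $s$ is precisely the critical-point condition $\nabla_{L_0}\zeta_{L_0}(s')=0$ for every $s'>d$. Gruber's classification then forces every layer of $L_0$ to be strongly eutactic and provides, in dimensions $2$ and $3$, the explicit lists $\{\Z^2,\Lambda_1\}$ and $\{\Z^3,\mathsf{D}_3,\mathsf{D}_3^*\}$. The main obstacle lies in the Mellin step: because $\mu_f$ is signed in general (the Morse formula \eqref{eq:laplacetransformMorse} being a case in point), the interchanges (integration by parts, Fubini, swapping sum and integral) require $\int_0^{+\infty} t^{1-\Re s}\,d|\mu_f|(t)<+\infty$ and the vanishing of all the boundary terms, so one must first locate a nonempty strip of $s$ on which these absolute convergence conditions simultaneously hold before the identity $0=\Phi(s)\sum_\lambda\lambda^{-s}\tilde T^\lambda$ can be validly exploited.
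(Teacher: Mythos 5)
Your proposal is correct in outline and shares the paper's overall skeleton --- propagate the stationarity from $I$ to all of $(0,+\infty)$ by analyticity, reduce to criticality of the Epstein zeta function for all $s>d$, and invoke Gruber's Corollaries 1 and 2 --- but you execute the crucial reduction by a genuinely different mechanism. The paper writes $E_f[V^{1/d}L]=\int_0^{+\infty}\left[\theta_L\left(V^{2/d}t/\pi\right)-1\right]d\mu_f(t)$, differentiates under the integral, and then simply \emph{asserts} that $\nabla_{L_0}\theta_{L_0}(t/\pi)=0$ for almost every $t>0$, after which $\nabla_{L_0}\zeta_{L_0}(s)=0$ for all $s>d$ follows because $f_s(r)=r^{-s/2}$ lies in the admissible class; you instead decompose the gradient over layers, reduce to $\sum_\lambda f'(u\lambda)\tilde T^\lambda=0$ for all $u>0$, and untangle the layers by a Mellin transform, $\Phi(s)\sum_\lambda \lambda^{-s}\tilde T^\lambda=0$. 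Your route buys several things: the forward implication (strong eutaxy of every layer, i.e. $\tilde T^\lambda=0$, implies stationarity at every volume) drops out immediately, so you do not need the paper's citation of \cite[Thm 4.4]{CoulSchurm2018}; and once the Dirichlet series $\sum_\lambda\lambda^{-s}\tilde T^\lambda$ vanishes on a half-plane, uniqueness of Dirichlet coefficients already yields $\tilde T^\lambda=0$ directly, so Gruber's Corollary 1 is strictly needed only on the paper's route (Corollary 2 is still needed for the explicit lists in dimensions $2$ and $3$). Two remarks on your technical caveat, which is well-placed and is precisely the point the paper glosses over. First, you can dispense with integration by parts and its boundary terms: Tonelli applied to $\int_0^{+\infty}r^{\sigma-1}\int_0^{+\infty}te^{-rt}d|\mu_f|(t)\,dr=\Gamma(\sigma)\int_0^{+\infty}t^{1-\sigma}d|\mu_f|(t)$ shows that the single condition $\int_0^{+\infty}t^{1-\Re s}\,d|\mu_f|(t)<+\infty$, for some $s$ with $\Re s$ in the half-plane of absolute convergence of the Dirichlet series, justifies all interchanges at once and gives $\Phi(s)=-\Gamma(s)\int_0^{+\infty}t^{1-s}d\mu_f(t)$. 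Second, this nonempty-strip requirement is \emph{not} a consequence of the theorem's stated hypotheses: a measure spreading mass self-similarly over all scales, e.g. $\mu_f=\sum_{j\in\Z}c^j2^{-j}\delta_{2^j}$ with $c>2^{d/2+\eta+1}$, is positive, yields $f(r)=O(r^{-d/2-\eta})$, and yet has empty Mellin strip --- and for such $f$ the paper's asserted step ``it follows that $\nabla_{L_0}\theta_{L_0}(t/\pi)=0$ a.e.'' is equally unjustified, since multiplicatively self-similar kernels of the form $h(u)=c\,h(2u)$ defeat the implicit injectivity argument. For the potentials the paper actually uses (Morse, via \eqref{eq:laplacetransformMorse}, whose $|\mu_f|$ has strip $(1/2,+\infty)$, and Lennard-Jones/power laws) the strip is nonempty and intersects the relevant half-plane, so your argument closes the gap rigorously there; on this point your proof is more careful than the paper's own.
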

\begin{remark}
As we will see in the proof, it also means that $L_0$ is a critical point of $\theta_L(\alpha)$ for almost all $\alpha>0$, where the lattice theta function is defined by \eqref{def-thetafunction}. Furthermore, in higher dimension, as explained in \cite[Cor 2]{Gruber}, there are only finitely many such lattices. 
\end{remark}
\begin{proof}
Let $L_0\in \mathcal{L}_d^\circ$. We assume that $\nabla_{L_0} E_f[V^{1/d}L_0]=0$ for any $V\in I$ where $I\subset \R$ is an open interval. We easily show, using Fubini's theorem and the definition of $f$, that, for any $L\in \mathcal{L}_d^\circ(1)$, and where $\theta_L$ is defined by \eqref{def-thetafunction},
\begin{align*}
E_f[V^{1/d} L]&=\sum_{p\in L\backslash \{0\}} f\left(V^{2/d} |p|^2\right)=\int_0^{+\infty} \left[ \theta_L\left( \frac{V^{2/d}t}{\pi} \right)-1  \right]d\mu_f(t)
\end{align*}
We therefore get, by Lebesgue's dominated convergence Theorem,
$$
\nabla_{L_0} E_f[V^{1/d}L_0]=\int_0^{+\infty} \nabla_{L_0} \theta_{L_0}\left(\frac{V^{2/d}t}{\pi}  \right)d\mu_f(t).
$$
By analyticity of $\alpha\mapsto \theta_L(\alpha)$ for any $L$, it follows that all the components of $V^{2/d}\mapsto \nabla_{L_0} E_f[V^{1/d}L_0]$ are also analytic functions on $(0,+\infty)$. Thus, for each component, its set of zeros is a discrete set, which implies that $\nabla_{L_0} E_f[V^{1/d}L_0]=0$ for all $V>0$, proving that $I=(0,+\infty)$. It follows that $\nabla_{L_0} \theta_{L_0}\left(\frac{t}{\pi}  \right)=0$ for almost every $t>0$, i.e. $L_0$ is a critical point of $L\mapsto \theta_L(\alpha)$ for almost every $\alpha>0$. Since $\zeta_L(s)=E_{f_s}[L]$ for $f_s(r)=\frac{1}{r^{s/2}}$ that belongs to the space of functions defined in the statement of our theorem, it follows that $L_0$ is a critical point of the Epstein zeta function $L\mapsto \zeta_L(s)$ for all $s>d$. By \cite[Cor 1]{Gruber}, the only such lattices have their layers being strongly eutactic and the result is proved because all these lattices are critical points for $E_f$ for all fixed volume as proved in \cite[Thm 4.4]{CoulSchurm2018}. Furthermore, the strongly eutactic lattices in dimensions $2$ and $3$ are recalled in \cite[Cor 2]{Gruber} and are the square, triangular, simple cubic, FCC and BCC lattices.
\end{proof}

\section{Numerical investigations in dimension 2}\label{sec-numeric2d}
In this part, as in the next one treating the three-dimensional minimization problem, we want to understand how the minimizer of $E_{\alpha,r_0}$ in $\mathcal{L}_2^\circ(A)$ changes when $A$ varies as well as the nature of its global minimizer in $\mathcal{L}_2$. We also want to compare our numerical observations with the one we have performed in \cite{Beterloc} for the classical Lennard-Jones potential $V_{LJ}(r)=\frac{1}{r^{12}}-\frac{2}{r^6}$. We then have to choose properly the values of $\alpha$ and $r_0$. We have decided that the parameters have to satisfy $V_{LJ}(1)=f(1)=\min_r f(r)=-1$ and $V_{LJ}''(1)=f''(1)$, i.e. we want $r_0=1$ and $\alpha=6$. Figure \ref{fig-LJMorse} shows the graph of both potentials. They obviously are very similar in a small neighbourhood of $r=1$, but $f$ is decreasing faster to $0$ as $r$ increases and is equal to $e^{6}-2\approx 401.4$ for $r=0$.\\
\begin{figure}[!h]
\centering
\includegraphics[width=7cm]{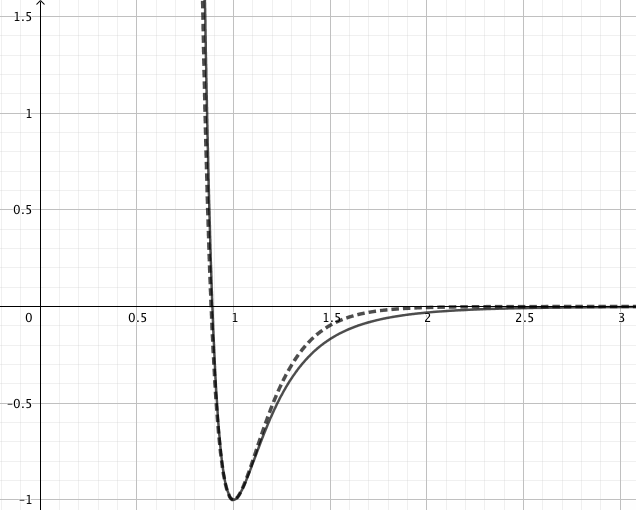} 
\caption{Plot of $V_M$ for $\alpha=6,r_0=1$ (dashed line) and $V_{LJ}(r)=r^{-12}-2r^{-6}$.}
\label{fig-LJMorse}
\end{figure}

As we will see in this section, we also have performed the same investigation for many values of $\alpha,r_0$, and in particular for $\alpha=3$ and $r_0=3$ (see Section \ref{subsec-Conj}). It turns out that the transition between the minimizers of $E_{\alpha,r_0}$ with respect to the area $A$ is more clear in the latter case.

\subsection{Local optimality of the triangular and square lattices}

The numerical study of the Hessian's sign can be easily done for the triangular lattice $\Lambda_A$ and the square lattice $\sqrt{A}\Z^2$. We already know that these lattices are critical points for $E_{\alpha,r_0}$ for all $A>0$ and that the Hessian is a multiple of the identity in the triangular lattice case (see Lemma \ref{derivtriangular}). For the square lattice, it turns out that, again for symmetry reasons (see \cite[Cor. 3.8]{Beterloc}), the Hessian is also diagonal, thus we only have again to study the sign of its eigenvalues. Our results are summarized in Table \ref{table:localtrisquare}.

\medskip

\begin{table}
\begin{center}
\begin{tabular}[H]{|c|c|c|c|c|}
\hline
 \textbf{Lattice} & \textbf{$\Lambda_A$ (M)} & \textbf{$\Lambda_A$ (LJ)} & \textbf{$\sqrt{A}\Z^2$ (M)} & \textbf{$\sqrt{A}\Z^2$ (LJ)}\\
 \hline
 \textbf{Local minimum} & $A<1.175$ & $A<1.152$ & $1.555<A<1.285$ & $1.143<A<1.268$\\
 \hline
 \textbf{Local maximum} & $A>1.175$ & $A>1.152$ & $\emptyset$ & $\emptyset$\\
 \hline
 \textbf{Saddle point} & $\emptyset$ & $\emptyset$ & $A\not\in (1.555,1.1285)$ & $A\not \in (1.143,1.268)$\\
 \hline
\end{tabular}
\caption{Local optimality of $\Lambda_A$ and $\sqrt{A}\Z^2$ for Morse potential (M) with $r_0=1$, $\alpha=6$, and Lennard-Jones (LJ) potential for which the values are taken from \cite{Beterloc}}\label{table:localtrisquare}
\end{center}
\end{table}

\subsection{Confirmation of our conjecture in dimension 2}\label{subsec-Conj}

As in \cite{Beterloc}, we have investigated the minimizer of $E_{\alpha,r_0}$ in $\mathcal{L}_2^\circ(A)$ when the area $A$ varies. Our study is again based on the investigation of this minimizer among rhombic and rectangular lattices defined by \eqref{rectangular}-\eqref{rhombic}. Our results are summarized in Table \ref{table-conj} and again compared to the one found for the classical Lennard-Jones potential in \cite{Beterloc}. As in the Lennard-Jones case, we observe that there is a transition triangular-rhombic-square-rectangular as $A$ increases. Certainly due to the exponential decay of the Morse potential, all the transitions appear earlier than in the Lennard-Jones case. Furthermore, we also observe a transition from a triangular lattice to a rhombic lattice with an angle slightly larger than $60^\circ$ around $A\approx 1.1560011044$. Then we have observed that the minimizer becomes extremely quickly and continuously a square lattice, as $A$ increases, for $A\approx 1.1560011045$. This transition with a discontinuous jump is better observed in the case $\alpha=3$, $r_0=3$, where it appears for $A\approx 9.285$, the transition being from a triangular lattice to a rhombic lattice with an angle $\theta\approx 72.19^\circ$ (see Figure \ref{fig-rhombic}), and continuously becoming a square lattice at $A\approx 9.4$. Moreover, as proved in Theorem \ref{prop:stat2d}, we also observe that the triangular and square lattices are the only one being minimizers in an open interval of areas and, moreover, that the minimizer of $E_{\alpha,r_0}$ in $\mathcal{L}_2^\circ(A)$ becomes a rectangular lattice arbitrarily thin as $A$ goes to $+\infty$ (see Figure \ref{fig-rect}). Thus, for the Morse potential, the global behaviour of the minimizer of $E_{\alpha,r_0}$ as $A$ varies seems to be qualitatively exactly the same as for the Lennard-Jones potential, confirming our Conjecture \cite[Sec. 5.4]{Beterloc} already recalled in the introduction of this paper.

\medskip

Furthermore, we observe that, for any $A\leq 1$, the triangular lattice seems to be the minimizer of $E_{6,1}$ in $\mathcal{L}_2^\circ(A)$. Since the minimum of $f$ is achieved for $r=r_0=1$, it is easy to prove (see e.g. \cite[Step 3 p. 3252]{BetTheta15}) that the area $A_0$ of the global minimizer of $E_{6,1}$ in $\mathcal{L}_2$ satisfies $A_0\leq 1$. Therefore, it seems to be numerically clear that this global minimizer is a triangular lattice, as we expect for the classical Lennard-Jones energy and as we have already proved for some Lennard-Jones-type potentials with small parameters in \cite[Thm 1.2]{BetTheta15}. This results is conjectured in \cite[Sec. 5.4]{Beterloc} to be true for any difference $f=g-h$ of completely monotone potential such that $f$ has one well.

\begin{figure}
\centering
\includegraphics[width=8cm]{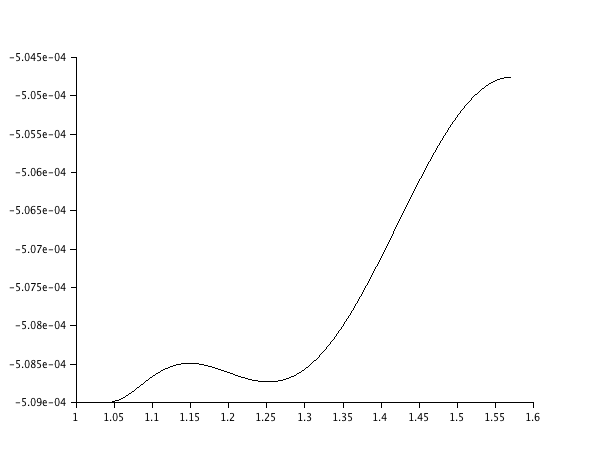} \quad \includegraphics[width=8cm]{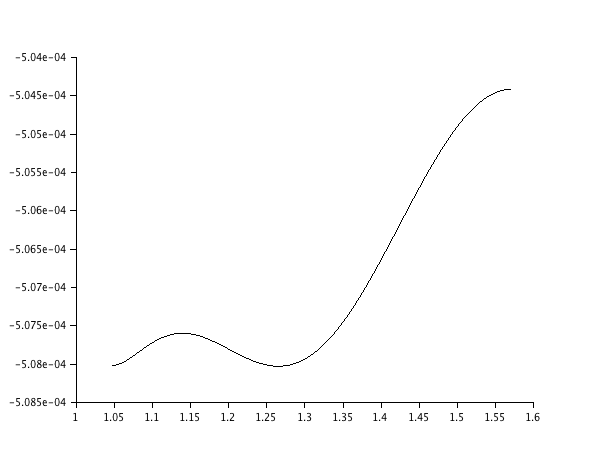}
\caption{For $\alpha=3$, $r_0=3$, plot of $\theta\mapsto E_{3,3}[L_\theta]$, where $L_\theta$ is a rhombic lattice with angle $\theta$ (see \eqref{rhombic}), on $[\pi/3,\pi/2]$. We observe a transition of the minimizer from a triangular lattice when $A=9.28$ (left) to a rhombic lattice with an angle $\theta \approx 72.19^\circ$ when $A=9.285$ (right).}
\label{fig-rhombic}
\end{figure}

\begin{figure}
\centering
\includegraphics[width=8cm]{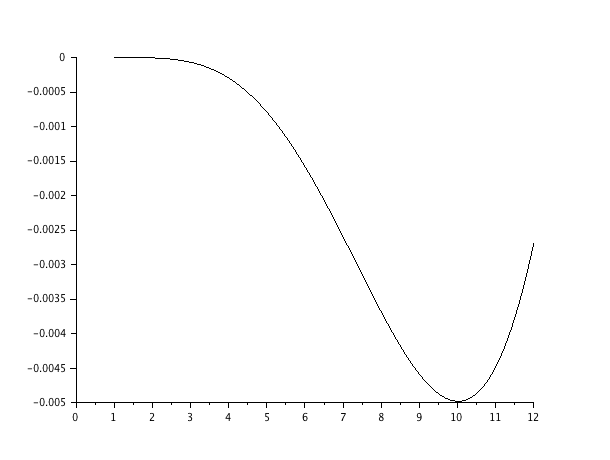} \quad \includegraphics[width=8cm]{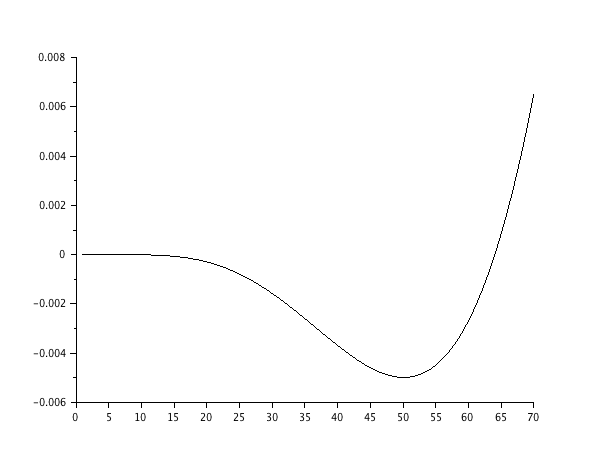}
\caption{The minimizer of $E_{6,1}$ becomes more and more thin as $A\to +\infty$. We have plotted $y\mapsto E_{6,1}[\sqrt{A}L_y]$ where $L_y$, $y\geq 1$ is a rectangular lattice (see \eqref{rectangular}), for $A=10$ (left) and $A=50$ (right). We have computed the minimum in many other cases and it seems that the minimum $y_A$ of $y\mapsto E_{6,1}[\sqrt{A}L_y]$ satisfied $y_A\approx A$, showing that the minimizer, for large $A$, is becoming more and more thin as $A$ increases.}
\label{fig-rect}
\end{figure}

\begin{remark}[The large $\alpha$ case in any dimension]
We observe that the Morse potential $V_M$, for $r_0=1$ fixed, converges to $-\delta(x-1)$ as $\alpha\to +\infty$ (see Figure \ref{fig:Morseplots}). A similar proof as the one we have done for the Lennard-Jones type potential in \cite[Thm 1.13]{OptinonCM} shows the global optimality of a lattice achieving the kissing number with balls of radius $1/2$, for sufficiently large $\alpha$. In particular, in dimension $2$ (resp. 3), the global minimizer of $E_{\alpha,r_0}$ in $\mathcal{L}_d$ for large $\alpha$ is a triangular (resp. FCC) lattice.
\end{remark}

\section{Numerical investigations in dimension 3}\label{sec-numeric3d}
We now give the results of our investigation for the minimization of $E_{\alpha,r_0}$ in $\mathcal{L}_3$ and some comparisons of the Morse energy for BCC/FCC lattices and HCP structure.

\subsection{Local optimality of the cubic lattices}
As in dimension 2, for symmetry reasons, it is straightforward to prove that all the cubic lattices are critical point of the Morse energy, i.e. for any $\alpha,r_0,V\in (0,+\infty)$, $V^{\frac{1}{3}}\Z^3$, $V^{\frac{1}{3}}\mathsf{D}_3$ and $V^{\frac{1}{3}}\mathsf{D}_3^*$ are critical points of $E_{\alpha,r_0}$ in $\mathcal{L}_3^\circ(V)$ (see \cite[Prop 3.2]{Beterminlocal3d}).

\medskip

We have first investigated the local optimality of these cubic lattices, as in \cite{Beterminlocal3d}, for $E_{\alpha,r_0}$ with respect to the volume $V$ of their unit cells. We have used the formulas proved in \cite[Sec. 4]{Beterminlocal3d}. We can see that the results are quite similar with what we get in the classical Lennard-Jones case, except for the BCC lattice $V^{\frac{1}{3}}\mathsf{D}_3^*$ for which there is a difference with the FCC lattice. This is basically due to the lack of homogeneity of the Morse potential and also to the minimizer transition for the exponential sum $\theta_{\sqrt{L}}$ described in Section \ref{sec-conj3d}. We also notice that this cubic lattices are the only one being critical points of $E_{\alpha,r_0}$ in $\mathcal{L}_3^\circ(V)$ for $V$ being in some open intervals of volumes (see Theorem \ref{prop:stat3d}).

\begin{table}
\begin{center}
\begin{tabular}[H]{|c|c|c|c|}
\hline
 \textbf{Lattice} & \textbf{$V^{\frac{1}{3}}\mathsf{D}_3$}  & \textbf{$V^{\frac{1}{3}}\mathsf{D}_3^*$}  & \textbf{$V^{\frac{1}{3}}\Z^3$} \\
 \hline
 \textbf{Local min} & (M): $V<1.125$ & (M): $V<0.33$ & (M): $1.215<V<1.425$\\
 & (LJ): $V<1.091$ & (LJ): $V< 1.091$& (LJ):$1.2<V<1.344$  \\
 \hline
  \textbf{Local max} &(M): $V>1.375$ & (M): $V>1.215$ & (M): $\emptyset$\\
 & (LJ): $V>1.313$ & (LJ): $V>1.313$ &(LJ): $\emptyset$\\
  \hline
   \textbf{Saddle point}  & (M): $1.125<V<1.375$ & (M): $0.33<V<1.215$ & (M): $V\not \in (1.215,1.425)$\\
 & (LJ): $1.091<V<1.313$ & (LJ): $1.091<V<1.313$ & (LJ): $V\not\in (1.2,1.344)$ \\
   \hline
\end{tabular}
\caption{Values of $V$ such that the cubic lattices $V^{\frac{1}{3}}\mathsf{D}_3, V^{\frac{1}{3}}\mathsf{D}_3^*$ and $V^{\frac{1}{3}}\Z^3$ are local optimizers for $E_{6,1}$ (notation: (M)) and $E_{V_{LJ}}$ (notation: (LJ)) for which the values are taken from \cite{Beterminlocal3d}.}\label{table-3dcubic}
\end{center}
\end{table}
\begin{remark}\label{rem-alpha3locmin}
We also notice that, for $\alpha=3$ and $r_0=1$, $V^{\frac{1}{3}}\mathsf{D}_3^*$ is a local minimum of $E_{3,1}$ in $\mathcal{L}_3^\circ(V)$ if and only if $V\in (0,V_0]$ where $V_0\approx 1.085$. 
\end{remark}

\subsection{Comparison of possible global minimizers}
We first give a result explaining the behaviour of $E_{\alpha,r_0}$ among the dilated version of a given lattice $L$.

\begin{lemma}\label{lem-dilationenergy}
For any fixed lattice $L\in \mathcal{L}_3$ and any $\alpha,r_0\in (0,+\infty)$, the function $f_L(\lambda):=E_{\alpha,r_0}[\lambda L]$ is decreasing on $(0,\lambda_0)$ and increasing on $(\lambda_0,+\infty)$ for some $\lambda_0$ depending on $\alpha,r_0,L$.
\end{lemma}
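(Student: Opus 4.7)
The plan is to differentiate $f_L$ explicitly and to reduce the claim to the strict monotonicity of a ratio of exponential sums. Set $\Theta(\lambda) := \sum_{p \in L \setminus \{0\}} e^{-\alpha\lambda |p|}$, so that, modulo an additive constant, $f_L(\lambda) = e^{\alpha r_0}\Theta(2\lambda) - 2\Theta(\lambda)$. A direct term-by-term differentiation gives
\[
f_L'(\lambda) = 2\alpha\, G(2\alpha\lambda)\,[H(\lambda) - e^{\alpha r_0}],
\qquad G(t) := \sum_{p\neq 0} |p|\, e^{-t|p|},\quad H(\lambda) := G(\alpha\lambda)/G(2\alpha\lambda).
\]
Since $G>0$ pointwise, the sign of $f_L'(\lambda)$ is governed entirely by the sign of $H(\lambda) - e^{\alpha r_0}$.

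Next I would study $H$. Replacing the lattice sum by an integral on $\R^3$ gives the continuum approximation $G(t) \sim 24\pi/(V(L)\, t^4)$ as $t\to 0^+$, hence $H(0^+) = 16$; as $\lambda \to +\infty$, the sum defining $G$ is dominated by the shortest lattice vectors of common norm $m(L)$, so $H(\lambda) \sim e^{\alpha\lambda\, m(L)} \to +\infty$. The key step is then to prove that $H$ is strictly increasing on $(0,+\infty)$. Granted this, the equation $H(\lambda_0) = e^{\alpha r_0}$ has at most one solution; when it exists, $f_L' < 0$ to its left and $f_L' > 0$ to its right, which is the claimed shape, and when it does not exist $f_L$ is strictly monotone (and the conclusion is still covered by placing $\lambda_0$ at an endpoint).

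A direct computation yields $(\log H)'(\lambda) = \alpha[L(\alpha\lambda) - 2L(2\alpha\lambda)]$ with $L(t) := G'(t)/G(t) < 0$. Setting $M(t) := -L(t) = \sum_{p\neq 0}|p|^2 e^{-t|p|}\big/ \sum_{p\neq 0}|p| e^{-t|p|}$, so that $M(t)$ is the expectation of $|p|$ under the probability measure with weights proportional to $|p| e^{-t|p|}$, the claim reduces to proving
\[
M(t) < 2\, M(2t) \qquad \text{for every }t>0,
\]
i.e.\ doubling the ``inverse-temperature'' parameter cannot decrease the tilted mean of $|p|$ by more than a factor of two. After cross-multiplying and symmetrizing in $(p,q)$, this is equivalent to the positivity of
\[
\sum_{p,q\neq 0}|p|\,|q|\left[(2|q|-|p|)\,e^{-t(|p|+2|q|)} + (2|p|-|q|)\,e^{-t(2|p|+|q|)}\right],
\]
which I would establish via a Cauchy--Schwarz estimate on the Borel measure $\sum_{p\neq 0}|p|\,\delta_{|p|}$ combined with the log-convexity (and hence complete monotonicity) of $G$.

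The main obstacle will be precisely this last inequality: in the continuum scaling one has $M(t)\sim 4/t$, for which $M(t) = 2M(2t)$ holds with equality, so the strict inequality needed for the monotonicity of $H$ relies entirely on the subleading, discrete corrections encoded by the dual-lattice Poisson terms. Handling this delicate estimate is the technical heart of the argument; once it is in hand, the asymptotic behaviour of $H$ together with the intermediate-value theorem closes the proof.
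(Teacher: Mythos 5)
Your opening reduction is exactly the paper's own proof: the paper computes
\[
f_L'(\lambda)=2\alpha\Bigl(\sum_{p\in L}|p|e^{-\alpha\lambda|p|}-e^{\alpha r_0}\sum_{p\in L}|p|e^{-2\alpha\lambda|p|}\Bigr)=2\alpha\, g_L(2\lambda)\bigl(H(\lambda)-e^{\alpha r_0}\bigr),
\]
with $g_L(\lambda):=\sum_{p\in L}|p|e^{-\alpha\lambda|p|}$ and $H(\lambda):=g_L(\lambda)/g_L(2\lambda)$, and then disposes of the crucial step in one line: the ratio is declared strictly increasing ``by comparison of exponential growth'', with values sweeping $(0,+\infty)$. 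You correctly isolate that same step as the whole content of the lemma, reduce it to $M(t)<2M(2t)$ for the tilted mean $M(t)=\sum_{p\neq0}|p|^2e^{-t|p|}\big/\sum_{p\neq0}|p|e^{-t|p|}$, and then leave it unproven. That is a genuine gap, and the tools you name will not close it: log-convexity of your $G$ is \emph{equivalent} to $M$ being decreasing, i.e.\ it gives the upper bound $M(2t)\le M(t)$, whereas you need the lower bound $M(2t)>M(t)/2$; equivalently you need $N(t):=tM(t)$ to increase under doubling, and since $N\equiv 4$ in the continuum limit (as you yourself observe), no Cauchy--Schwarz or convexity estimate that survives the scaling limit can produce the strict inequality. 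Worse, in your symmetrized double sum the pair terms with $|q|>2|p|$ are negative and dominate their partners for large $t$, so positivity cannot be obtained term by term; it is a global statement about the shell distribution of $L$, and for strongly anisotropic lattices, where $N$ crosses over between the dimensional plateaus $4$, $3$, $2$ before reaching the single-shell regime $N\sim t\,m(L)$ (with $m(L)$ the shortest norm), it is not even evident that it holds for every $L\in\mathcal{L}_3$.

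On the other hand, your boundary analysis is more careful than the paper's and in fact corrects it. From $g_L(\lambda)\sim 24\pi/(V\alpha^4\lambda^4)$ as $\lambda\to0$, with $V$ the covolume of $L$, one gets $H(0^+)=2^4=16$, not $0$, so the paper's assertion that the ratio takes all values in $(0,+\infty)$ is wrong. Consequently, if $e^{\alpha r_0}\le 16$, i.e.\ $\alpha r_0\le 4\log 2\approx 2.77$, there need be no interior zero of $f_L'$ at all: indeed $f_L(\lambda)\sim\pi\bigl(e^{\alpha r_0}-16\bigr)/(V\alpha^3\lambda^3)\to-\infty$ as $\lambda\to0^+$, and (granting the monotonicity of $H$) $f_L$ is then increasing on all of $(0,+\infty)$; the lemma as literally stated fails in that regime and survives only with your endpoint convention $\lambda_0=0$. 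In summary: you follow the same route as the paper and are sharper at the $\lambda\to0$ endpoint, but the pivotal monotonicity of $H$ --- asserted without justification in the paper --- remains unestablished in your proposal, and your sketch for it invokes tools that demonstrably cannot suffice.
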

\begin{proof}
We have
$$
f_L'(\lambda)=2\alpha\left( \sum_{p\in L} |p| e^{-\alpha \lambda |p|} -e^{\alpha r_0}\sum_{p\in L} |p| e^{-2\alpha \lambda |p|}\right)
$$
and 
$$
f_L'(\lambda)=0 \iff e^{\alpha r_0}=\frac{g_L(\lambda)}{g_L(2\lambda)},\quad g_L(\lambda):=\sum_{p\in L} |p| e^{-\alpha \lambda |p|}> 0.
$$
It is now clear, by comparison of exponential growth, that, for any fixed $\alpha,r_0>0$, $\lambda\mapsto  \frac{g_L(\lambda)}{g_L(2\lambda)}$ is strictly increasing and has its values on $(0,+\infty)$, both bounds corresponding to the limit of $g_L$ as $\lambda\to 0$ and $\lambda\to +\infty$. Therefore, there exists a unique $\lambda_0$ such that $f_L'(\lambda_0)=0$ and it is easy to see that $f_L'(\lambda)<0$ (resp. $f_L'(\lambda)>0$) as $\lambda\to 0$ (resp. $\lambda\to +\infty$). It follows that $f_L$ is decreasing on $(0,\lambda_0)$ and increasing on $(\lambda_0,+\infty)$.
\end{proof}

Therefore, comparing the energies of $\lambda L$ among all the $\lambda>0$ and for different lattices $L$, we are able to numerically find what is the good candidate for the global minimization of $E_{\alpha,r_0}$ among these structures. Fixing $r_0=1$, we have investigated the energy of $\lambda \mathsf{D}_3$, $\lambda \mathsf{D}_3^*$ and $\lambda \hcp$ for $\lambda>0$ and different values of $\alpha$. We have defined $H_\alpha:=\min_\lambda E_{\alpha,1}[\lambda \hcp]$, $B_\alpha:=\min_\lambda E_{\alpha,1}[\lambda \mathsf{D}_3^*]$ and $F_\alpha:=\min_\lambda E_{\alpha,1}[\lambda \mathsf{D}_3]$. Using Lemma \ref{lem-dilationenergy} to be sure to localize all minima, we then have observed the following:
\begin{enumerate}
\item For any $\alpha\in \{3+0.01k : k\in \{0,1,...,5\}\}$, then $B_\alpha<F_\alpha<H_\alpha$,
\item For any $\alpha \in \{ 3+0.01k : k\in \{6,...,40\}\}$, then $F_\alpha<H_\alpha<B_\alpha$,
\item For any $\alpha\in \{3.5,4,5,6,7,8,9,10\}$, then $H_\alpha<F_\alpha<B_\alpha$.
\end{enumerate}
These results support our Conjecture \ref{ConjMorse3d} and the BCC/FCC phase transition is heuristically explained in the next section.

\begin{remark}[Local minimality of the probable global BCC minimizer]
It is also important to notice how close are the values of the minimal energies. For example, in the $\alpha=3$ case, we have $|B_3- F_3|<5\times 10^{-4}$. Furthermore, according to Remark \ref{rem-alpha3locmin} and the fact that a global minimizer of any $E_{\alpha,1}$ must have a volume smaller than $1$, we know that $V_m^{1/3}\mathsf{D}_3^*$ -- where $V_m$ is the volume minimizing $V\mapsto E_{3,1}[V^{1/3}\mathsf{D}_3^*]$ -- is a local minimum of $E_{3,1}$ on $\mathcal{L}_d$, i.e. the expected global minimizer of $E_{3,1}$ is a local minimizer. The same can be shown in all the previously stated cases.
\end{remark}

\subsection{Heuristic arguments supporting Conjecture \ref{ConjMorse3d} based on duality relation}\label{sec-conj3d}

Let us define, for any $\alpha>0$ and any Bravais lattice $L\in \mathcal{L}_3$,
$$
F_\alpha[L]:=\sum_{p\in L} e^{- \alpha |p|}.
$$
Using the Laplace transform representation of $r\mapsto e^{-\alpha \sqrt{r}}$, the Jacobi Transformation Formula for the lattice theta function (see e.g. \cite[Prop. 1.12]{BeterminKnuepfer-preprint}) and the change of variable $t=\frac{u\alpha^2}{4}$, we obtain for any $L\in \mathcal{L}_3^\circ(1)$ and any $\alpha>0$, by Fubini's theorem,
\begin{align*}
F_\alpha[L] &=\sum_{p\in L} \int_0^{+\infty} e^{-t|p|^2}\frac{\alpha e^{-\frac{\alpha^2}{4t}}}{2\sqrt{\pi} t^{\frac{3}{2}}}dt=\frac{\alpha}{2 \sqrt{\pi}}\int_0^{+\infty} \theta_L\left( \frac{t}{\pi} \right) \frac{e^{-\frac{\alpha^2}{4t}}}{t^{\frac{3}{2}}}dt=\frac{\alpha \pi}{2}\int_0^{+\infty} \theta_{L^*}\left(\frac{\pi}{t} \right)\frac{e^{-\frac{\alpha^2}{4t}}}{t^{\frac{3}{2}}}dt\\
&=\frac{8\pi}{\alpha^3}\int_0^{+\infty} \theta_{L^*}\left(\frac{4\pi}{u\alpha^2}  \right) e^{-\frac{1}{u}}u^{-3}du.
\end{align*}
Since $u\mapsto e^{ -\frac{1}{u}}u^{-3}$ is decreasing very rapidly and is equal to $0$ for $u=0$ -- i.e. for any $\varepsilon$, $\{u\in \R_+: e^{ -\frac{1}{u}}u^{-3}>\varepsilon\}$ is included in a connected compact set -- $\alpha$ large implies that the minimizer of $L\mapsto F_\alpha[L]$ in $\mathcal{L}_3^\circ(1)$ has the tendency to be the minimizer of $L\mapsto \theta_{L^*}(4\pi/(u\alpha^2))$ in $\mathcal{L}_3^\circ(1)$ where $u$ is in a compact set and $\alpha$ is large. The Sarnak-Str\"ombergsson conjecture \cite[Eq. (43)]{SarStromb} tells us that the minimizer of $L\mapsto \theta_L(\beta)$ on $\mathcal{L}_3^\circ(1)$ is expected to be $\mathsf{D}_3^*$ for $\beta<1$. Therefore, the minimizer of $L\mapsto F_\alpha[L]$ in $\mathcal{L}_3^\circ(1)$ for large values of $\alpha$ is expected to be $\mathsf{D}_3$. By duality, it is clear that the minimizer in $\mathcal{L}_3^\circ(1)$ for small values of $\alpha$ is expected to be $\mathsf{D}_3^*$. This duality relation has been observed by Torquato and Stillinger in \cite[p. 4]{Torquatoduality}.  We numerically compute that $F_\alpha[\mathsf{D}_3^*]<F_\alpha[\mathsf{D}_3] \iff \alpha<3.86$ (see Figure \ref{fig-thetaroot}). 

\begin{figure}
\centering
\includegraphics[width=8cm]{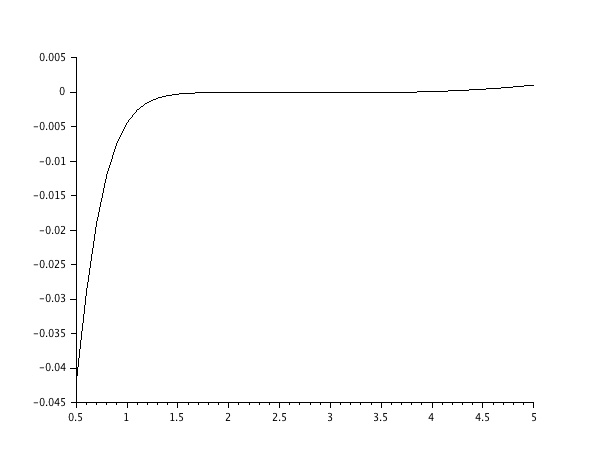} \quad \includegraphics[width=8cm]{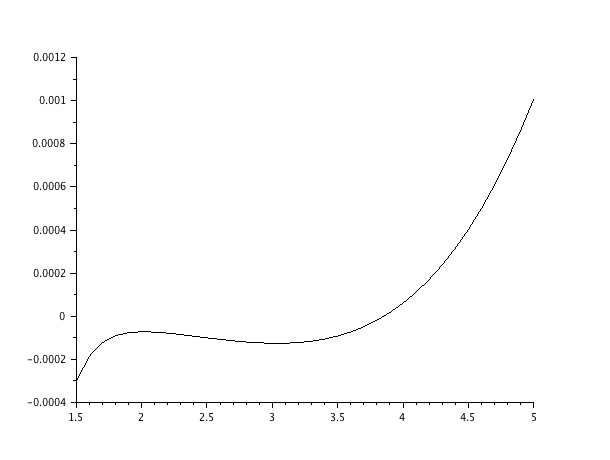}
\caption{Graph of $\alpha \mapsto F_\alpha[\mathsf{D}_3]-F_\alpha[\mathsf{D}_3^*]$ on $[0.5,5]$ (left) and $[1.5,5]$ (right). We observe that we have equality if and only if $\alpha\approx 3.86$.}
\label{fig-thetaroot}
\end{figure}

In particular, the asymptotic minimality as $\alpha\to 0$ and $\alpha\to +\infty$ for the theta function is already known (see e.g. \cite{BeterminPetrache}). We then have the following rigorous result:

\begin{lemma}[Asymptotic minimizer of $F_\alpha$]
As $\alpha\to 0$ (resp. $\alpha\to +\infty$), $\mathsf{D}_3^*$ (resp. $\mathsf{D}_3$) is the unique asymptotic minimizer of $F_\alpha$ in $\mathcal{L}_3^\circ(1)$, i.e. for any $L\in \mathcal{L}_3^\circ(1)$, there exists $\alpha_L$ such that for any $0<\alpha<\alpha_L$, $F_\alpha[L]>F_\alpha[\mathsf{D}_3^*]$ (resp. there exists $\tilde{\alpha}_L$ such that for any $\alpha>\tilde{\alpha}_L$, $F_\alpha[L]>F_\alpha[\mathsf{D}_3]$).
\end{lemma}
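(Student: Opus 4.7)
The plan is to handle the two asymptotic regimes separately, with Poisson summation as the main tool in both. The Fourier transform of $e^{-\alpha|\cdot|}$ in $\mathbb{R}^3$ is $\hat{f}(\xi) = 8\pi\alpha/(\alpha^2 + 4\pi^2|\xi|^2)^2$, so for any $L \in \mathcal{L}_3^\circ(1)$,
\[
F_\alpha[L] \;=\; \sum_{q \in L^*}\frac{8\pi\alpha}{(\alpha^2 + 4\pi^2|q|^2)^2} \;=\; 8\pi\alpha\int_0^\infty t\,e^{-\alpha^2 t}\,\theta_{L^*}(4\pi t)\,dt.
\]
The weight $t\,e^{-\alpha^2 t}$ peaks at $t \sim 1/\alpha^2$, which places the argument $\beta = 4\pi t$ of $\theta_{L^*}$ in the small-$\beta$ regime when $\alpha$ is large and in the large-$\beta$ regime when $\alpha$ is small.

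For $\alpha \to +\infty$, I would argue directly from the defining series: $F_\alpha[L] = 1 + \kappa(L)\,e^{-\alpha\lambda(L)}\bigl(1 + O(e^{-\alpha(\lambda_2(L)-\lambda(L))})\bigr)$, dominated by the shortest nonzero vector, where $\lambda(L), \kappa(L), \lambda_2(L)$ denote the shortest length, kissing number, and second-shortest length. By Gauss's classical theorem on densest lattice packings in $\mathbb{R}^3$, $\mathsf{D}_3$ uniquely (up to rotation) maximizes $\lambda$ on $\mathcal{L}_3^\circ(1)$, so for any $L$ not isometric to $\mathsf{D}_3$ we have $\lambda(L) < \lambda(\mathsf{D}_3)$ strictly, the ratio $e^{-\alpha\lambda(L)}/e^{-\alpha\lambda(\mathsf{D}_3)}$ diverges, and $F_\alpha[L] > F_\alpha[\mathsf{D}_3]$ for $\alpha$ past some $\tilde{\alpha}_L$. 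For $\alpha \to 0^+$, I would expand each Poisson term using $(\alpha^2 + 4\pi^2|q|^2)^{-2} = (4\pi^2|q|^2)^{-2}(1 + O(\alpha^2/|q|^2))^{-2}$ and sum (with dominated convergence justified by $\zeta_{L^*}(6) < \infty$) to get
\[
F_\alpha[L] \;=\; \frac{8\pi}{\alpha^3} + \frac{\alpha}{2\pi^3}\,\zeta_{L^*}(4) + O(\alpha^3).
\]
Subtracting the same expansion at $L = \mathsf{D}_3^*$ and using $(\mathsf{D}_3^*)^* = \mathsf{D}_3$ gives
\[
F_\alpha[L] - F_\alpha[\mathsf{D}_3^*] \;=\; \frac{\alpha}{2\pi^3}\bigl(\zeta_{L^*}(4) - \zeta_{\mathsf{D}_3}(4)\bigr) + O(\alpha^3),
\]
so the conclusion reduces to the uniqueness of $\mathsf{D}_3$ as a minimizer of the Epstein zeta function $\zeta_M(4)$ on $\mathcal{L}_3^\circ(1)$. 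An alternative route via the integral representation exploits that for $\alpha \to 0^+$ the weight concentrates on large $t$, placing $\theta_{L^*}$ in its large-argument regime where the rigorous asymptotic minimality of the theta function recalled in the paper (and due to \cite{BeterminPetrache}) identifies $\mathsf{D}_3$ as the unique minimizer.

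The main obstacle is the small-$\alpha$ case. Both formulations produce a leading correction of size $\alpha$, so the positive contribution from the range in which $\theta_{L^*} \geq \theta_{\mathsf{D}_3}$ (equivalently, the range where the theta asymptotic minimality has kicked in) scales to the same order as the possibly negative contribution from the complementary range, in which only the Jacobi bound $\theta_M(\beta) \lesssim \beta^{-3/2}$ as $\beta \to 0^+$ controls the integrand. Closing the argument therefore requires either invoking a global Epstein-zeta minimization statement for $\zeta_M(4)$ in dimension three, or performing a quantitative split of the $t$-integral with sharp control on $\theta_{L^*}(\beta) - \theta_{\mathsf{D}_3}(\beta)$ for $\beta$ in the moderate range; this delicate balance is where I would invest most of the effort.
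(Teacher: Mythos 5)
Your $\alpha\to+\infty$ half is correct and complete, and it is the same mechanism the paper relies on: truncating the defining series at the second shell and invoking Gauss's theorem that $\mathsf{D}_3$ uniquely (up to isometry) maximizes the shortest nonzero vector length among unimodular three-dimensional lattices gives $F_\alpha[L]-F_\alpha[\mathsf{D}_3]=\kappa(L)e^{-\alpha\lambda(L)}(1+o(1))>0$ for $\alpha$ large, which is exactly the large-theta-argument regime of $\theta_{L^*}$ in the paper's representation $F_\alpha[L]=\frac{8\pi}{\alpha^3}\int_0^{+\infty}\theta_{L^*}\left(\frac{4\pi}{u\alpha^2}\right)e^{-1/u}u^{-3}du$. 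Your Poisson summation formula and the small-$\alpha$ expansion $F_\alpha[L]=\frac{8\pi}{\alpha^3}+\frac{\alpha}{2\pi^3}\zeta_{L^*}(4)+O(\alpha^3)$ are also correct as computations.

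The genuine gap is the $\alpha\to0^+$ half, and — as you half-suspect — neither of your two proposed escape routes can close it with currently known results. Your reduction is exact: $\lim_{\alpha\to0}\alpha^{-1}\bigl(F_\alpha[L]-F_\alpha[\mathsf{D}_3^*]\bigr)=\frac{1}{2\pi^3}\bigl(\zeta_{L^*}(4)-\zeta_{\mathsf{D}_3}(4)\bigr)$, so any proof of the claimed statement would in particular establish that $\mathsf{D}_3$ minimizes $M\mapsto\zeta_M(4)$ on $\mathcal{L}_3^\circ(1)$ — but that is precisely the Sarnak--Str\"ombergsson conjecture \cite[Eq. (44)]{SarStromb}, which the paper itself invokes only as a conjecture in its Remark on the duality relation, and no global Epstein zeta minimization theorem is known in dimension three (only the height result). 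The ``quantitative split'' alternative fails for the reason your own scaling analysis identifies: the region where the asymptotic theta comparison is available contributes at the same order $\alpha$ as the uncontrolled region, and since the total order-$\alpha$ coefficient \emph{is} the zeta difference, any sharpened split is equivalent to proving the zeta statement. For comparison, the paper writes no proof of this lemma at all: it asserts it as a consequence of the pointwise asymptotic minimality of theta functions from \cite{BeterminPetrache} via the integral representation above. For $\alpha\to+\infty$ that inference is sound and matches your argument; for $\alpha\to0$ your analysis pinpoints exactly why it is not — the theta asymptotics as $\beta\to0$ rest on the Jacobi functional equation, which makes the lattice-dependent correction to $\theta_L$ exponentially small, whereas the kernel $e^{-\alpha|x|}$, having only polynomial Fourier decay, produces a polynomially small correction with coefficient $\zeta_{L^*}(4)$, so the limit cannot be interchanged with the $u$-integration. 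Your proposal is therefore incomplete, but the missing step coincides with a step the paper's own one-line justification does not rigorously supply either.
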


Now, for $E_{\alpha,r_0}[L]=e^{\alpha r_0}F_{2\alpha}[L]-2F_\alpha[L]$, it is then expected that the BCC lattice (resp. FCC lattice) is the unique minimizer at high density if $\alpha$ is small enough (resp. $\alpha$ large enough). Furthermore, as in the Lennard-Jones type case, the shape of the minimizer at high density is a good candidate for the global minimization problem in $\mathcal{L}_3$. Explaining why the HCP structure is optimal in $\mathcal{P}_3$ for large $\alpha$ is much more difficult. It certainly follows from the fact that $\theta_{\mathsf{D}_3}(\alpha)<\theta_{\hcp}(\alpha)$ for any $\alpha>0$, as explained in \cite[Ex. 2.6]{BeterminPetrache}, and from the attractive-repulsive form of the potential combined with the integral representation of $F_\alpha[L]$ previously stated. That is why we can expect Conjecture \ref{ConjMorse3d} to be true. 

\begin{remark}[Duality relation]
More precisely, the Poisson Summation formula gives, for any three-dimensional Bravais lattice, since the Fourier transform of $\R^3\ni x\mapsto e^{-|x|}$ is $y\mapsto C(1+y^2)^{-2}$,
$$
E_{\alpha,r_0}[L]=C\alpha^3 \frac{1}{|L|}\sum_{q\in L^*} \left\{\frac{4 e^{\alpha r_0}}{(4\alpha^2 + |q|^2)^2}-\frac{1}{(\alpha^2 + |q|^2)^2}  \right\},
$$
where $C$ is a constant. Therefore, minimizing $E_{\alpha,r_0}$ is equivalent with minimizing 
$$
\widehat{E}_{\alpha,r_0}[L]:=\sum_{q\in L^*} \left\{\frac{4 e^{\alpha r_0}}{(4\alpha^2 + |q|^2)^2}-\frac{1}{(\alpha^2 + |q|^2)^2}  \right\}.
$$
We then observe that, as $\alpha\to 0$, we have $\widehat{E}_{\alpha,r_0}[L]\sim \sum_{q\in L^*} \frac{3}{|q|^4}$ which is expected -- by Sarnak-Str\"ombergsson Conjecture for the Epstein zeta function \cite[Eq. (44)]{SarStromb} -- to be minimized in $\mathcal{L}_3^\circ(V)$, for any $V$, by $L^*=\mathsf{D}_3$, i.e. $L=\mathsf{D}_3^*$. This fact also supports the first part of our Conjecture \ref{ConjMorse3d}.
\end{remark}

\subsection{Comparison of our conjecture with some experimental values of $\alpha$ and $r_0$}\label{sec-experiment}
We now want to compare our Conjecture \ref{ConjMorse3d} with the values of $\alpha$ empirically obtained for metals in \cite{GirifalcoWeizer,LincolnetalMorse,SharmaKachhavaMorse,PamuketalMorse,HungetalMorse} and rare-gas crystals in \cite{Raff1990,AlimietalMorse,ParsonMorse,BarkerMorse}. In order to compare these values, we need a result we have already proved in \cite[Thm 1.11]{OptinonCM} in the Lennard-Jones type potentials case. We recall that the shape of a lattice is its class of equivalence modulo dilation in the fundamental domain of $\mathcal{L}_d^\circ(1)$ where only one copy of each lattice exists (see  \cite[Sec. 1.1]{OptinonCM}).
\begin{lemma}\label{lem:scalingr0}
For any $\alpha>0$ and $r_0>0$, it holds
$$
\displaystyle \argmin_L E_{\alpha,r_0}[L]=\frac{\argmin_L E_{\alpha r_0,1}[L]}{r_0}.
$$
In particular, the shape of the global minimizer for both energies is the same, i.e. its shape is independent of $r_0$
\end{lemma}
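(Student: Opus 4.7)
The plan is to exhibit an explicit bijection between the two minimization problems by a simple rescaling, from which both claims fall out at once. First I would rewrite the energy in a form that makes the dependence on $r_0$ transparent: starting from
\begin{equation*}
E_{\alpha,r_0}[L]=e^{\alpha r_0}\sum_{p\in L}e^{-2\alpha|p|}-2\sum_{p\in L}e^{-\alpha|p|},
\end{equation*}
substitute $q=p/r_0$ in both lattice sums. Then $p\in L$ is in bijection with $q\in L/r_0$, and $\alpha|p|=\alpha r_0\,|q|$, so
\begin{equation*}
E_{\alpha,r_0}[L]=e^{(\alpha r_0)\cdot 1}\sum_{q\in L/r_0}e^{-2(\alpha r_0)|q|}-2\sum_{q\in L/r_0}e^{-(\alpha r_0)|q|}=E_{\alpha r_0,1}[L/r_0].
\end{equation*}
This is the only computation in the proof and it is a one-line substitution; no analytic estimate is needed.

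Next I would use this identity to transfer minimizers. The map $\Phi_{r_0}:\mathcal{L}_d\to\mathcal{L}_d$, $\Phi_{r_0}(L)=L/r_0$ (or equivalently $\Phi_{r_0}^{-1}(L')=r_0 L'$), is a bijection of the space of Bravais lattices that intertwines the two energy functionals: $E_{\alpha,r_0}=E_{\alpha r_0,1}\circ\Phi_{r_0}$. Consequently $L^\star$ minimizes $E_{\alpha,r_0}$ on $\mathcal{L}_d$ if and only if $\Phi_{r_0}(L^\star)=L^\star/r_0$ minimizes $E_{\alpha r_0,1}$, which is precisely the asserted formula relating the two $\argmin$'s.

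Finally, for the shape statement, observe that $\Phi_{r_0}$ is a pure homothety of ratio $1/r_0$, hence it preserves the equivalence class of a lattice in $\mathcal{L}_d^\circ(1)$ modulo rotation and dilation. Therefore the shapes of the two minimizers coincide, and in particular the shape depends only on the product $\alpha r_0$, not on $\alpha$ and $r_0$ separately. There is no real obstacle here; the only potential pitfall is consistently tracking the direction of the rescaling (it is $L\mapsto L/r_0$ that sends $E_{\alpha,r_0}$ to $E_{\alpha r_0,1}$, not the other way round), which one needs to check carefully to make sure the $r_0$ ends up on the correct side of the formula.
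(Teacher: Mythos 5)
Your proof is correct and follows essentially the same route as the paper: the paper's entire argument is the single identity $E_{\alpha,r_0}[L]=E_{\alpha r_0,1}[L/r_0]$, obtained exactly by your rescaling $p\mapsto p/r_0$, with the transfer of minimizers and the shape statement left implicit as "straightforward consequences." Your additional bookkeeping (the bijection $\Phi_{r_0}$ and the observation that a homothety preserves shape) merely spells out what the paper leaves to the reader.
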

\begin{proof}
It is a straightforward consequence of the following equality
$$
E_{\alpha,r_0}[L]=e^{\alpha r_0}\sum_{p\in L}e^{-2\alpha r_0 |p/r_0|}-2\sum_{p\in L} e^{-\alpha r_0 |p/r_0|}=E_{\alpha r_0,1}[L/r_0].
$$
\end{proof}

In \cite{GirifalcoWeizer,LincolnetalMorse,SharmaKachhavaMorse,PamuketalMorse,HungetalMorse}, different values of $\alpha,r_0$ have been computed for metals according to different experimental data. The structure of each metal is taken in its (known) ground-state. More precisely, let us review the results obtained in these papers:
\begin{enumerate}
\item The energy of vaporization, lattice constant and compressibility at zero temperature have been used to compute the parameters in \cite{GirifalcoWeizer}. Therefore, the equation of state and the elastic constants computed with these parameters reasonably agreed with experiment for FCC and BCC metals, as well as all local stability conditions. However, the agreement is more accurate for the FCC metals.

\item In \cite{LincolnetalMorse}, the lattice parameter, bulk modulus and cohesive energy have been used to compute the parameters, cutting off the range of the potential after 176 (for the FCC structure) and 168 neighbours (for the BCC structure). These parameters were used to compute the pressure derivatives of the second-order elastic constants. These computations match fairly well with the experimental values, which is not the case for third-order elastic constants.

\item In \cite{PamuketalMorse}, crystalline state physical properties at any temperature are used to compute the Morse parameters, thus the second order elastic constants are shown to match with the experimental values. The values are reasonably accurate with the one computed in  \cite{GirifalcoWeizer} for the FCC structures, but not so accurate for the BCC one like K and Na. The authors then remarked that ``even though for metals the additive form of the total potential in terms of pair interactions is not a very good approximation, for the sake of simplicity, pair potentials are widely used in calculating various properties of metallic systems." (e.g. for Monte-Carlo-type calculations).

\item The correlation between molecular properties and crystal state have been investigated, by assuming the identity of qualitative behaviour of metals in the two states, in \cite{SharmaKachhavaMorse} for computing the parameters. In particular, this hypothesis permits the invariance of the fundamental potential parameter $\alpha$ in the two states for the metals, where
\begin{equation}\label{def-alphaexp}
\alpha=2\pi w_e\sqrt{\frac{\mu}{2D_e}},
\end{equation}
$w_e$ being the classical frequency data of small vibrations of a diatomic molecule, $\mu$ is the reduced mass of the molecule and $D_e$ represents its dissociation energy. They hence have computed the values of the cohesive energy, thermal expansion, Gr\"uneisen parameter and elastic constants. Satisfactory agreement is obtained for elastic constants of Cu and Pb at zero temperature, which is not true for the thermal expansion and the Gr\"uneisen parameter for the same elements as well as for the cohesive energy. Therefore, computing $\alpha$ from the cohesive energy, the authors found a good match with the others experimental quantities.

\item In \cite{HungetalMorse}, the parameters have been computed, for BCC crystals, using the volume per atom and atomic number in each elementary cell, as well as the energy of sublimation, the compressibility and the lattice constant. These parameters values show a good agreement with the anharmonic interatomic effective potential and the local force constant in X-ray absorption fine structure for Fe, W and Mo.
\end{enumerate}

A first observation is that, in all the cited works, the values of the parameters match with some local quantities (e.g. local deformation of the solid in its ground-state) that are experimentally found from the real (already given) ground-state structure. In this paper, we are interested in the values of the parameters $\alpha,r_0$, thus only for the pairs with $r_0=1$ (by Lemma \ref{lem:scalingr0}), such that the FCC lattice, BCC lattice or HCP structures are the ground-state of the energy per point. Since there is no metal having a HCP structure, we necessarily should have $\alpha r_0<3.5$. In the list of parameters that are computed in the previously cited papers, it only happens for:
\begin{enumerate}
\item K, Na, Cs and Rb in \cite{GirifalcoWeizer}, but there is no match of the real crystal structure with the expected ground-state.
\item K and Na in \cite{LincolnetalMorse}, and their BCC structures match with the real ground-state because the value of $\alpha r_0$ is 3.05643 (we have numerically checked the optimality of the BCC lattice in this case) for K and 2.95443 for Na.
\item Li, Na and K in \cite{PamuketalMorse}, but the expected ground-state structure only matches with real one for Li (BCC) for which $\alpha r_0=2.9751$.
\item Li an Cu in \cite{SharmaKachhavaMorse}, but the expected ground-state structure only matches with the real one for Cu (FCC) for which $\alpha r_0=3.08757$.
\end{enumerate}
Furthermore it never happens for the values given in \cite{HungetalMorse}. Therefore, we can conclude that the central-force model for metals agrees with the minimizing lattice of $E_{\alpha r_0,1}$ only for few BCC structures (Na, K, Li) and only for one element (Cu) that has a FCC structure. It is interesting to remark that these metals atoms with which the structures match are not the heaviest possible -- as we could expect -- thus the approximation of the atoms interaction in metals as a sum of Morse two-body potentials seems to be unsatisfactory, when the parameters $\alpha,r_0$ are chosen according to the real physical properties of the solid.

\medskip

Furthermore, the same comparison can be done with the rare-gas crystals models with Morse potential proposed in \cite{Raff1990,AlimietalMorse,ParsonMorse,BarkerMorse}. For all the values of the parameters, we have $\alpha r_0>3.5$ and the FCC structure that is expected to be the ground-state for all the rare-gas crystals does not match with the HCP structure theoretically expected from our conjecture. This central-force model with Morse pairwise interaction is then again not appropriate for describing the real structure of rare-gas crystals as a ground-state.

%

\medskip

\noindent \textbf{Acknowledgement:} I acknowledge support
from VILLUM FONDEN via the QMATH Centre of Excellence (grant No. 10059). I am also grateful to Xavier Blanc and Jan Philip Solovej for their feedback on the first version of this paper.

{\small
\bibliographystyle{plain} 
\bibliography{BiblioMorse}}
\end{document}